
\newif\ifacm
\acmfalse

\ifacm
  \documentclass[jacm]{acmart}
\else
  \documentclass[11pt]{article}
  \usepackage{amssymb}
  \usepackage{amsmath}
\fi

\usepackage{graphicx}
\usepackage{enumerate}
\usepackage{hyperref}
\usepackage{color}
\usepackage[utf8]{inputenc}
\usepackage[ruled,lined,noend]{algorithm2e}


\newcommand{\Comment}[1]{} 

\newcommand{\SAppS}{\mathit{app}}
\newcommand{\SApp}[2]{\SAppS(#1, #2)}
\newcommand{\Struct}{\mathfrak{R}}


\newcommand{\TheoryRU}{\mathcal{RU}}

\newcommand{\VarAss}{\aleph}
\newcommand{\VarsF}[1]{\mathit{vars}(\phi)}

\newcommand{\Qed}{\ifacm\else\qed\fi}

\newcommand{\ModCont}[2]{\gamma^{#1}_{#2}}

\newcommand{\ExtendedVersion}{This is an extended and revised version of an article with the same name and author, which appeared in the Proceedings of the 48th International Symposium on Mathematical Foundations of Computer Science (MFCS 2023). }


\usepackage {ifthen}

\ifthenelse{\isundefined{\definition}}{\newtheorem{definition}{Definition}}{}

\ifthenelse{\isundefined{\theorem}}{\newtheorem{theorem}{Theorem}}{}
\ifthenelse{\isundefined{\conjecture}}{}{}
\ifthenelse{\isundefined{\lemma}}{\newtheorem{lemma}{Lemma}}{}
\ifthenelse{\isundefined{\corollary}}{\newtheorem{corollary}{Corollary}}{}

\ifthenelse{\isundefined{\example}}{\newtheorem{example}{Example}}{}
\ifthenelse{\isundefined{\property}}{\newtheorem{property}{Property}}{}
\ifthenelse{\isundefined{\assumption}}{}{}
\ifthenelse{\isundefined{\proof}}{\newenvironment{proof}{\noindent{\bf Proof.}\hspace{0.05in}}{\par\mbox{}\par}}{}

\newcommand{\vs}{\vspace*{0.3cm}}

\ifthenelse{\isundefined{\qed}}{\newcommand{\qed}{\mbox{$\blacksquare$}}}{}

\hyphenation{}

\setlength{\unitlength}{6pt}


\newcommand{\NN}{\mathbb{N}}

\newcommand{\RR}{\mathbb{R}}


\newenvironment{lists}[1]{
                 \begin{list}{}{
                     \setlength{\listparindent}{0in}
                     \settowidth{\labelwidth}{#1}
                     \setlength{\leftmargin}{\labelwidth}
                     \addtolength{\leftmargin}{\labelsep}
                     }
                 }{
                 \end{list}
                 }

\newenvironment{given-find}[2]{
                               \vs 
                               \noindent \hrule
                               \begin{lists}{Given:XX}
                               \item[\sc Given: \hfill] #1                                 
                               \item[\sc Find: \hfill] #2                               
                               \vs 
                               \noindent \hrule 
                               }{
                               \end{lists}
                               }

\ifacm
\title[Deciding Predicate Logical Theories of Real-Valued Functions]{Deciding Predicate Logical Theories of\\Real-Valued Functions}
\else
\title{Deciding Predicate Logical Theories of\\Real-Valued Functions\footnote{\ExtendedVersion}}
\fi
\ifacm\titlenote{\ExtendedVersion}\fi
\ifacm\author{Stefan Ratschan}\orcid{0000-0003-1710-1513}\affiliation{\institution{Institute of Computer Science, Czech Academy of Sciences}\country{Czech Republic}}
\else\author{Stefan Ratschan\footnote{ORCID:0000-0003-1710-1513}\\Institute of Computer Science, Czech Academy of Sciences}
\fi





\begin{document}

\ifacm\else\maketitle\fi

\begin{abstract}
  The notion of a real-valued function is central to mathematics, computer science, and many other scientific fields. Despite this importance, there are hardly any positive results on decision procedures for predicate logical theories that reason about real-valued functions. This paper defines a first-order predicate language for reasoning about multi-dimensional smooth real-valued functions and their derivatives, and demonstrates that---despite the obvious undecidability barriers---certain positive decidability results for such a language are indeed possible: The quantifier-free case is decidable, since it allows the reduction of reasoning about real-valued functions to the reasoning about real numbers and uninterpreted function symbols. In the undecidable case with quantifiers ranging over real-valued variables, it is possible algorithmically detect satisfiability for all formulas possessing certain robustness properties.

\end{abstract}

\ifacm\maketitle\fi

\section{Introduction}

Predicate logical decision procedures have become a major workhorse in computer science, for example, as the basic reasoning engines in SAT modulo theory (SMT) solvers~\cite{Barrett:18}. Common decision procedures support theories such as uninterpreted function symbols, arrays, linear integer arithmetic, and real arithmetic. However, many areas of computer science (e.g., computer aided design, formal verification of physical systems, machine learning) use as their basic data structure not only real numbers but real-valued \emph{functions}, for example, to represent solid objects~\cite{Farin:88}, correctness certificates~\cite{Prajna:04,Platzer:18} or neural networks~\cite{Aggarwal:18}. Moreover, due to their fundamental role as a basic mathematical object, real-valued functions are used as a basic modeling tool throughout many further scientific areas. But unfortunately, real-valued functions have been left almost completely untouched by research on predicate logical decision procedures. The goal of this paper is to take a first step to fill this gap. \pagebreak[2]

More concretely, the paper provides the following contributions:
\begin{itemize}
\item We formalize a first-order language of real-valued functions that allows reasoning about both real numbers and multi-dimensional real-valued smooth functions based on the usual arithmetical operations, function evaluation and differentiation.
\item We prove that a quantifier-free fragment of the language that restricts arithmetic to addition and multiplication of real numbers, but still provides function evaluation and differentiation, is decidable. 
\item We prove that for a fragment of the language that keeps the restriction of arithmetic to addition and multiplication of real numbers but allows arbitrary quantification on real-valued variables (but not on function-valued variables), there is an algorithm that can detect satisfiability for all input formulas that are robustly satisfiable in the sense that there is a satisfying assignment that stays satisfying under small perturbations of the values of  function-valued variables.
\item We prove that removing the restriction of arithmetic to addition and multiplication of real numbers, allowing transcendental function symbols such as $\sin$ and $\cos$ (in general, real-valued functions that are computable in the sense of computable analysis~\cite{Pour-El:89,Brattka:21})
  allows a similar algorithm whose success depends on a different notion of robustness. This generalizes previous results~\cite{Gao:12,Ratschan:02f} to the case with real-valued functions.
\item We combine these results to take into account formulas whose individual parts belong to different sub-classes from the previous items.
\end{itemize}
These results provide a straightforward methodology for extending existence results for functional constraints to computability results. For example, we demonstrate that despite the fact that polynomial ordinary differential equations are able to simulate Turing machines~\cite{Brattka:21}, it is possible to extend a result showing the existence of polynomial Lyapunov functions for asymptotically stable ordinary differential equations~\cite{Peet:09} to semi-decidability of asymptotic stability for polynomial ordinary differential equations.

We neither claim theoretical nor practical efficiency of the resulting decision procedures. Instead, our goal is to overcome scientific fragmentation by developing a  framework that can be instantiated to more efficient techniques for specific applications.

The paper has the following structure: In the next section, we discuss related work. In Section~\ref{sec:lang}, we define the syntax and semantics of the mentioned predicate language for reasoning about smooth real-valued functions. In Section~\ref{sec:quantifier-free-case} we prove decidability of the quantifier- and transcendental function free case. In Section~\ref{sec:scal-quant} we discuss decidability of the case with arbitrary quantification on real-valued variables, but still without function symbols denoting transcendental function. In Section~\ref{sec:transcendental}, we extend these result for the case of function symbols denoting transcendental function. In Section~\ref{sec:combination}, we combine the results from the previous sections. In Section~\ref{sec:discussion} we discuss practical consequences of the results. And in Section~\ref{sec:conclusion} we conclude the paper.


\section{Related Work}


Reasoning about real-valued functions---that we also simply call \emph{real functions}---will, of course, be usually based on reasoning about real numbers. This is facilitated by the fact that unlike the case of the integers, in the case of the real numbers, its non-linear theory (i.e., the theory of real closed fields) is decidable~\cite{Tarski:51}. The decidability of the case with the exponential function is still unknown, but is decidable provided Schanuel's conjecture holds~\cite{MacIntyre:96}. Inclusion of the sine function makes the problem undecidable since---as a periodic function---it is able to encode the integers. This makes any theory that allows reasoning about systems of linear ordinary differential equations (ODEs) undecidable, since the sine function appears as the solution of the linear ODE $\dot{x}=-y, \dot{y}=x$.

In mathematical analysis, real functions are often abstracted to elements of abstract function spaces such as Banach spaces and Hilbert spaces~\cite{Lax:02}. However, with one notable exception~\cite{Solovay:12} we are aware of, corresponding predicate logical decision problems have been largely ignored by computer science.

An important occurrence of real functions is in the role of solutions of ordinary differential equations (ODEs) and hybrid dynamical systems. Formal verification of such systems has been an active research topic over many years~\cite{Doyen:18}, with a plethora of decidability and undecidability results~\cite{Fraenzle:99,Collins:09,Bournez:08,Bournez:18}. 
Deductive verification bases formal verification on automated reasoning frameworks such as hybrid dynamic logic~\cite{Platzer:18}, or proof assistants such as Isabelle/HOL~\cite{Foster:21}. Reasoning with functions as the solution of ODEs has been included into SAT solvers without formulation as a first-order decision problem~\cite{Eggers:08,Gao:13}, or using an alternative semantics based on floating-point approximation~\cite{Kolarik:20a}.

ODEs have also played a role as objects in constraint programming~\cite{Hickey:00}. In contrast to the work mentioned in this paragraph, in this paper, we introduce a general logical language with variables and predicate and function symbols ranging over real-valued functions. Especially, we allow multi-dimensional functions and partial differentiation, whereas ODEs and hybrid systems are defined using one-dimensional functions, only (the single dimension being time).

Computation in function spaces plays a major role in numerical analysis, where it is mostly restricted to representing solutions to certain specific computation problems, especially, solving ordinary or partial differential equations. There are also some general approaches to computing with functions~\cite{Driscoll:14,Collins:11a}. However, the basic assumption in numerical analysis is that the solution to the given problem exists and is unique, and the goal is to compute an approximation of this solution, whereas in this paper we consider satisfiability questions, where a proof of existence is the goal, not an assumption. 



Computer algebra~\cite{Gathen:03} studies computation with symbolic objects, especially polynomials, that can be interpreted as representations of real functions. Corresponding algorithms exist for solving differential equations~\cite{Winkler:19} or functional equations~\cite{Brown:24}, but they are either restricted to very specific classes of equations, or do not provide any guarantees for success, being heuristic in nature. Unlike that, in this paper we are interested in solving problems of reasoning about functions that are independent from a certain representation, which will allow us to provide formal guarantees of success for very general classes of problems.

The proof of decidability of the quantifier-free case will be based on abstracting function variables to uninterpreted function symbols. Abstraction to uninterpreted function symbols is a classical technique in formal verification~\cite{Bradley:07} that has also been applied to real functions~\cite{Cimatti:18}, but with the goal of modeling \emph{specific} function symbols, while in this paper we are interested in general reasoning about smooth real functions and their derivatives.

For quite some time, robustness has been recognized as tool for characterizing solvable cases of undecidable decision problems. It was used for dynamical systems~\cite{Fraenzle:99,Asarin:01} and for decision procedures for the real numbers~\cite{Ratschan:01c,Gao:12}. However, all of those results do not allow a general language for reasoning about real functions.

\Comment{


Bernard Boigelot, Pascal Fontaine, Baptiste Vergain: Decidability of Difference Logic over the Reals with Uninterpreted Unary Predicat
  
  Discrete case: Horn clause solvers (e.g., Arie Gurfinkel, Ruemmer) allow variables ranging over predicates (instead of functions), ILP
  
  \cite{Epstein:82,Epstein:82a}

  
Brisebarre, Nicolas, and Mioara Joldeş. "Chebyshev interpolation polynomial-based tools for rigorous computing." Proceedings of the 2010 International Symposium on Symbolic and Algebraic Computation. 2010.

Presburger arithmetic with a single unary uninterpreted predicate symbol is undecidable (the quantifiers in the arithmetical part must be different from the quantifier, i.e., satisfiability vs. validity of the predicate symbol, see work by Marcus Voigt and Christoph Weidenbach). But, our functions are NOT uninterpreted!

}










\section{Formal Syntax and Semantics}
\label{sec:lang}

In this section, we define the syntax and semantics of the first-order language for reasoning about real functions that we will want to decide. As a first example, consider the formula \[ \exists X\; \forall u,v \;.\; \SApp{\partial_1 X}{u,v}=1\wedge \SApp{\partial_2 X}{u,v}\leq u^2,\] that asks the question whether there exists a smooth function in $\mathbb{R}^2\rightarrow\mathbb{R}$ whose partial derivative in its first argument is one everywhere, and whose partial dervative in its second argument is less or equal the square of its first argument. The reader will find more examples at the beginning of each of the two following sections.

The language will be sorted, allowing variables that range over real numbers and variables that range over real functions.
We denote the sort ranging over real numbers, that we also call the \emph{scalar sort}, by $\mathcal{R}$, and the sorts ranging over real-valued functions, that we also call the \emph{function sorts}, by 
$\mathcal{F}_n$, where $n\in\mathbb{N}$ refers to the number of arguments (i.e., dimension of the domain)\Comment{equate the scalar sort $\mathcal{R}$ with $\mathcal{F}_0$?}. We will also use the symbol $\mathcal{F}$ to stand for any sort $\mathcal{F}_i,i\in\NN$. For each of those sorts, we assume a corresponding set of variables $\mathcal{V}=\mathcal{V}_{\mathcal{R}}\cup\mathcal{V}_{ \mathcal{F}_1}\cup\mathcal{V}_{ \mathcal{F}_2}\dots$. We will write the elements of $\mathcal{V}_{\mathcal{R}}$ using lowercase letters and call them \emph{scalar variables}. We will write the elements of $\mathcal{V}_{ \mathcal{F}_1}\cup\mathcal{V}_{ \mathcal{F}_2}\dots$ using uppercase letters and call them \emph{function variables}. We will also use the symbol~$\mathcal{V}_{ \mathcal{F}}$ to denote the set of all function variables $\mathcal{V}_{ \mathcal{F}_1}\cup\mathcal{V}_{ \mathcal{F}_2}\dots$.

We will build formulas based on the usual syntax of many-sorted first-order logic. Here, we allow rational constants, arithmetical function symbols such as $+, \times, \exp, \sin$, and predicate symbols $ =, \leq, \geq, <, >$ of the usual arity, all with arguments and---in the case of function symbols---also results of the sort $\mathcal{R}$.  For every $n\in\mathbb{N}$,  we allow the function symbols $\SAppS: \mathcal{F}_n\times\mathcal{R}^n\rightarrow\mathcal{R}$ and $\partial_i: \mathcal{F}_n\rightarrow \mathcal{F}_n$, $i\in \{1,\dots,n\}$ that we call app-operator and differentiation operator, respectively. As usual, we will often write the differentiation operator without parenthesis, and for  $X\in \mathcal{F}_1$, we also write $\dot{X}$ instead of $\partial_1X$. We will also call a term whose outermost symbol is the function symbol $\SAppS$, an \emph{app-term}.
We will call formulas whose function symbols are restricted to $\{ +, \times,\SAppS \} \cup \{ \partial_i\mid i\in\mathbb{N} \}$, and hence avoiding transcendental function symbols, \emph{function-algebraic}.

Throughout the paper, we will use the usual abbreviations $\forall x\in [\underline{a},\overline{a}]\;.\; \phi$ for $\forall x \;.\; [ \underline{a}\leq x \wedge x\leq \overline{a}]\Rightarrow \phi$, and $\exists x\in [\underline{a},\overline{a}]\;.\; \phi$ for
$\exists x \;.\;  \underline{a}\leq x \wedge x\leq \overline{a}\wedge \phi$. We will call a formula that only contains quantifiers that are bounded in this form a \emph{bounded formula}.



We define the semantics of formulas by defining a structure $\Struct$ giving the usual real-valued semantics to all function and predicate symbols. This allows us to avoid  questions of axiomatization and, at the same time, ensures compatibility with the common intuition. Clearly, satisfiability of a formula based on classical mathematical semantics implies its satisfiability wrt. an arbitrary axiomatization compatible with classical mathematics. 

In more detail, the structure $\Struct$ will be many-sorted, where the sort $\mathcal{R}$ ranges over the real numbers $\mathbb{R}$ and the sorts $\mathcal{F}_n$, $n\in\mathbb{N}$ range over the set of smooth (i.e., infinitely often differentiable) functions in $\mathbb{R}^n\rightarrow \mathbb{R}$. We will use the usual lambda-notation to denote such functions. For example, $\lambda x\:.\: x$ denotes the identity function. We will use the notation that for any smooth function $F: \RR^n\rightarrow\RR$, and tuple $(\beta_1,\dots, \beta_n)\in\NN_0^n$, $D^{(\beta_1,\dots, \beta_n)} F$ denotes the function that is the result of differentiating for every $i\in \{1,\dots,n\}$ the function $F$ $\beta_i$-times wrt. its $i$-th component.
The semantics of function and predicate symbols on the real numbers will be as usual.
The app-operator and differentiation operator are defined as follows:
\begin{itemize}
\item For every  $n\in\mathbb{N}$, for all $X\in \mathcal{F}_n$, $\Struct(\SAppS)(X,x_1,\dots,x_n):=X(x_1,\dots,x_n)$ (i.e., function application in its usual mathematical sense)\Comment{allow partial function application?}

\item For every $n\in\mathbb{N}$,  $i\in \{1,\dots,n\}$, for every $X\in \mathcal{F}_n$, $\Struct(\partial_i)(X):=D^d X$, where $d\in\NN_0^n$ with $d(i)=1$ and for every $k\neq i$, $d(k)= 0$ (i.e., the result of taking the derivative of $X$ wrt. its $i$-th argument).
\end{itemize}

We will denote the set of variable assignments assigning to each variable an element of its respective domain, by $\VarAss$. We can now assign semantics to formulas in the usual way:
Given an assignment $\alpha\in\VarAss$, we can extend it to terms, writing $\alpha(t)$ for the value of the term~$t$ in the structure $\Struct$ under the assignment $\alpha$. We will write $\alpha\models\phi$ iff the interpretation given by structure~$\Struct$ and assignment~$\alpha$ satisfies $\phi$.  We call a formula $\phi$ \emph{satisfiable} iff there is an  assignment $\alpha\in\VarAss$ such that $\alpha\models \phi$. In such a case we will also say that $\phi$ is $\mathcal{F}$-satisfiable. By abuse of notation, we will use the symbol $\mathcal{F}$ to not only denote the function sorts, but also the 
\emph{theory} $\mathcal{F}$ of $\mathcal{F}$-satisfiable formulas.








\section{Quantifier-Free Case}
\label{sec:quantifier-free-case}



In this section, we consider formulas that are quantifier-free and function-algebraic. Here are some examples:
\begin{itemize}
\item $\SApp{X}{t}\geq 1 \wedge \SApp{X}{t+1}^2\leq 1$: This formula restricts the value of the function $X$ at two different points $t$ and $t+1$. Since these points are different, for checking satisfiability of the formula, it suffices check satisfiability of the algebraic inequalities $r\geq 1 \wedge s^2\leq 1$. Based on a satisfying assignment for this formula, we get a satisfying assignment for the original formula by assigning to $X$ a function interpolating between the values for $r$ and~$s$.
\item $\SApp{X}{0}=0\wedge \SApp{\dot{X}}{1}=\SApp{X}{1}^2$. This formula not only restricts values for the function $X$, but also states a relationship between the value of $X$ and its derivative. The formula is satisfiable since the identity function satisfies the properties stated by the formula.
\item $\SApp{\partial_1 X}{t}=1\wedge \SApp{\partial_2 X}{t}=1$. This formula states a relationship between two partial derivatives of $X$ at the same value $t$. This holds, for example, for the function $X$ with $X(u,v)=u+v$.
\end{itemize}

The basic idea for deciding such formulas is, that quantifier-free formulas constrain the values of function variables only at a finite (but not fixed) subset of their domain which will allow us to treat them as uninterpreted function symbols. To do so, we have to get rid of the app- and differentiation operators. For this, observe that the only syntactic elements that result in terms of function sort are function variables and differentiation operators. Hence, differentiation operators can only occur in the form of terms of the form $\partial_{i_1} \partial_{i_2} \dots \partial_{i_n} V$. So we let $\tau_{\partial}(\phi)$ be the formula resulting from replacing every maximal term of this form (i.e., every term of this form that is not an argument of a differentiation operator) by a fresh function variable $V_{i_1,\dots,i_n}$.\Comment{Alternatively, one could handle differentiation as a higher-order function and use higher-order congruence closure~\cite{Barbosa:19}} For example,
\[\tau_{\partial}(\SApp{\partial_1 X}{t}+\SApp{\partial_2 X}{t}=1)\equiv\SApp{X_1}{t}+ \SApp{X_2}{t}=1.\]

The next step is to get rid of the app operator. For this,
we denote, for every quantifier-free formula~$\phi$, by~$\tau(\phi)$ the result of
replacing every app-term $\SApp{X}{t_1,\dots, t_k}$ of $\tau_{\partial}(\phi)$ by $X(t_1,\dots, t_k)$ where in the resulting formula, we now consider $X$ a $k$-ary function symbol. Continuing the example, we get \[\tau(\SApp{\partial_1 X}{t}+\SApp{\partial_2 X}{t}=1) \equiv X_1(t)+ X_2(t)=1.\]

The resulting formula is a formula in the language defined by the combination of the signature of the theory of real-closed fields and the signature of the theory of uninterpreted function symbols. 
The combination of these two theories, that we will denote by $\TheoryRU$, is decidable: The signatures of the theory of uninterpreted function symbols and the theory of real-closed fields only share equality, and both are stably infinite\footnote{A theory $T$ with signature $\Sigma$ is called stably infinite iff for every quantifier-free
$\Sigma$-formula $\phi$, if $F$ is $T$-satisfiable, then there exists some $T$-interpretation
that satisfies $F$ and has a domain of infinite cardinality~\cite{Nelson:79,Bradley:07}.}, hence the decision procedures for the individual procedures can be combined to a decision procedure for the combined theory by the Nelson-Oppen theory combination procedure~\cite{Nelson:79,Bradley:07}. As a consequence, we can algorithmically decide $\TheoryRU$-satisfiability \footnote{A formula $\phi$ is $\TheoryRU$-satisfiable iff there is an interpretation $I$ satisfying both the axioms of $\TheoryRU$ and $\phi$.} of translated formulas~$\tau(\phi)$.
Moreover, the translation preserves satisfiability:
\begin{theorem}
\label{thm:equisat}
  A conjunctive formula $\phi$ is $\mathcal{F}$-satisfiable if and only if 
$\tau(\phi)$ is $\TheoryRU$-satisfiable.
\end{theorem}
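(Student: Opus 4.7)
The plan is to prove the two directions separately: ($\Rightarrow$) is essentially a definitional check, while ($\Leftarrow$) requires an honest construction of smooth functions matching finitely many pieces of prescribed Taylor data.

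For the forward direction, given an $\mathcal{F}$-satisfying assignment $\alpha$, I would construct a $\TheoryRU$-interpretation by (i) reusing $\alpha$'s real values for scalar variables, (ii) interpreting the uninterpreted symbol $X$ coming from a function variable $X\in\mathcal{V}_{\mathcal{F}_n}$ as the actual smooth function $\alpha(X)$, and (iii) interpreting every fresh symbol $V_{i_1,\dots,i_n}$ as $D^{d}\alpha(V)$, where $d$ is the multi-index counting occurrences of each index in $(i_1,\dots,i_n)$. Since $\tau$ only rewrites $\SAppS$ and iterated $\partial_i$ into their definitionally equivalent counterparts, a routine induction on term structure will show that every atom of $\tau(\phi)$ evaluates to the same truth value as the corresponding atom of $\phi$.

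For the backward direction, starting from a $\TheoryRU$-model $\mathcal{M}\models\tau(\phi)$, I would collect the finitely many app-terms of $\tau(\phi)$. For each function variable $X\in\mathcal{V}_{\mathcal{F}_n}$, let $P_X\subset\mathbb{R}^n$ be the finite set of distinct points arising as evaluations under $\mathcal{M}$ of the argument tuples of $X$ or of any derivative-symbol $X_{i_1,\dots,i_k}$ in $\tau(\phi)$. By congruence in $\TheoryRU$, $\mathcal{M}$ assigns well-defined scalar values at each $p\in P_X$ to $X$ and to each relevant derivative-symbol, giving a finite bundle of prescribed ``Taylor data'' at each of finitely many points. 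The remaining task is to produce a smooth $\tilde X\colon\mathbb{R}^n\to\mathbb{R}$ realizing this data. I would handle this by standard multivariate Hermite interpolation: at each $p\in P_X$, form a Taylor polynomial $T_p$ whose coefficients match the prescribed values; pick smooth bump functions $\eta_p$ equal to $1$ on a neighborhood of $p$ and vanishing on neighborhoods of the other points of $P_X$; and set $\tilde X(x)=\sum_{p\in P_X}\eta_p(x)\,T_p(x-p)$. Taking $\alpha(X)=\tilde X$ for each function variable and $\alpha(v)=\mathcal{M}(v)$ for each scalar variable then yields $\alpha\models\phi$, because conjunctivity of $\phi$ ensures that all atoms hold simultaneously and their truth values match those of $\tau(\phi)$ under $\mathcal{M}$.

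The one substantive obstacle is internal consistency of the prescribed Taylor data at each $p\in P_X$, since Schwarz's theorem forces $D^d X(p)$ to depend only on the multi-index $d$, not on the order of differentiation. This requires that the fresh symbol $V_{i_1,\dots,i_n}$ introduced by $\tau_{\partial}$ be indexed up to permutation (equivalently, by the multi-index $d$), so that e.g.\ $\partial_1\partial_2 X$ and $\partial_2\partial_1 X$ are replaced by the \emph{same} symbol; without this convention, $\mathcal{M}$ could assign them different values at a common point and no smooth $\tilde X$ could realize that data. Once this canonicalization is in place, no further consistency constraint is imposed on the remaining Taylor coefficients, and the Hermite interpolation step goes through without obstruction.
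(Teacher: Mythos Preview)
Your forward direction and your Hermite-interpolation construction are essentially the paper's (the paper packages the interpolation step as a separate lemma, proved with the same bump-function trick you sketch).

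There is, however, a genuine gap in your backward direction. You write ``starting from a $\TheoryRU$-model $\mathcal{M}\models\tau(\phi)$'' and then immediately take ``$P_X\subset\mathbb{R}^n$\dots arising as evaluations under $\mathcal{M}$.'' But $\TheoryRU$ is axiomatic, and a satisfying interpretation need not have $\mathbb{R}$ as its domain: the real-closed-field part could be the real algebraic numbers, a non-archimedean real closed field, etc. In such a model the argument tuples and function values live in some $D\neq\mathbb{R}$, and your interpolation step---which requires genuine real points and real values---cannot be applied. The paper confronts exactly this: it uses the Nelson--Oppen decomposition to split $\tau(\phi)$ into a pure RCF part $\pi_R$ and a pure UF part $\pi_U$ linked by an arrangement $\rho(V,E)$; completeness of RCF then yields a \emph{real-valued} assignment satisfying $\pi_R\wedge\rho(V,E)$, congruence closure gives a finite term model for $\rho(V,E)\wedge\pi_U$, and the two are spliced into an interpretation over $\mathbb{R}$ before the interpolation lemma is invoked. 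A more direct repair of your argument would be Ackermann's reduction (flatten each function application to a fresh scalar variable and add the functional-consistency implications), producing a pure RCF formula to which completeness applies directly; only then do you get real points to interpolate through.

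Your observation about Schwarz's theorem is well taken and is a subtlety the paper glosses over: as literally defined, $\tau_{\partial}$ indexes the fresh symbols by the \emph{sequence} $(i_1,\dots,i_n)$ rather than by the multi-index, so $\partial_1\partial_2 X$ and $\partial_2\partial_1 X$ become distinct uninterpreted symbols that a $\TheoryRU$-model may interpret inconsistently at a common argument. The paper's interpolation lemma is stated for multi-indices in $\NN_0^n$, so its application tacitly presupposes the canonicalization you describe.
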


For proving this theorem, we have to bridge two differences between $\mathcal{F}$- and $\TheoryRU$-satisfiability: First, the semantics of $\mathcal{F}$-satisfiability restricts the domain of function variables to specific functions, more concretely, to smooth real function.  And second, the theories of real closed fields and uninterpreted function symbols are defined using axioms, unlike our theory~$\mathcal{F}$ that we defined semantically, by fixing a certain structure. Before going into the details of the proof, we state a few lemmas. The first one extracts the non-algorithmic core of the Nelson-Oppen method~\cite{Nelson:79,Tinelli:96,Bradley:07}\Comment{sorted or unsorted version?}:

\begin{lemma}
  \label{lem:nelson_oppen}
  Let $T_1$ and $T_2$ be two stably infinite theories of respective signatures $\Sigma_1$ and $\Sigma_2$, having only equality in common. Let $\phi_1$ be a conjunctive $\Sigma_1$-formula, and $\phi_2$ a conjunctive $\Sigma_2$-formula. Then $\phi_1\wedge\phi_2$ is 
$(T_1\cup T_2)$-satisfiable   iff there is an
equivalence relation $E$ on the common variables $V:= var(\phi_1)\cap var(\phi_2)$ s.t. $\phi_1\wedge\rho(V, E)$ is $T_1$-satisfiable and $\rho(V,E)\wedge\phi_2$ is $T_2$-satisfiable, where $\rho(V, E)$ is the formula \[\bigwedge_{u,v\in V\;.\; uEv} u=v \wedge \bigwedge_{u,v\in V\;.\; \neg (uEv)} u\neq v.\]
\end{lemma}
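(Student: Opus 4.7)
The plan is to prove both directions separately. For the forward direction, suppose $M$ is an interpretation witnessing $(T_1\cup T_2)$-satisfiability of $\phi_1\wedge\phi_2$. Define the equivalence relation $E$ on $V$ by $uEv$ iff $M(u)=M(v)$. Then $M\models\rho(V,E)$ by construction, and viewing $M$ as a $\Sigma_i$-structure (by forgetting the interpretations of symbols outside $\Sigma_i$) gives a $T_i$-model of $\phi_i\wedge\rho(V,E)$ for $i=1,2$; this reduct still satisfies $T_i$ because $T_i$ only constrains $\Sigma_i$-symbols.

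For the backward direction, suppose $M_1\models_{T_1}\phi_1\wedge\rho(V,E)$ and $M_2\models_{T_2}\rho(V,E)\wedge\phi_2$. The strategy is to fuse these two interpretations into a single $(\Sigma_1\cup\Sigma_2)$-structure. First, I would invoke stable infiniteness of $T_1$ and $T_2$ to replace $M_1$ and $M_2$ by $T_i$-models of the same formulas with infinite domains; then I would apply upward/downward Löwenheim--Skolem so that both domains have a common cardinality $\kappa$. Next, build a bijection $h\colon |M_1|\to|M_2|$ satisfying $h(M_1(v))=M_2(v)$ for every $v\in V$. This partial map is well-defined and injective because $\rho(V,E)$ records exactly which pairs of variables in $V$ are identified and which are separated, and it holds in both $M_1$ and $M_2$, so the equivalence classes induced on $V$ by each model coincide. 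Extend $h$ arbitrarily to a bijection on the remaining elements using the cardinality equality.

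Now define the combined structure $M$ with domain $|M_2|$ by interpreting every $\Sigma_2$-symbol as in $M_2$ and every $\Sigma_1$-symbol $f$ by pushforward, $M(f)(a_1,\dots):=h(M_1(f)(h^{-1}(a_1),\dots))$, and setting the valuation to $\alpha(v)=M_2(v)$. Since equality is the only shared symbol and any bijection preserves equality, the two definitions agree on it, so $M$ is coherent. It satisfies $T_2$ directly from $M_2$ and $T_1$ because pushforward along a bijection transports a $T_1$-model to a $T_1$-model; by the same token, truth of $\phi_1$ under $M_1$ transfers to $M$ (using $h(M_1(v))=M_2(v)=\alpha(v)$ for $v\in V$), while truth of $\phi_2$ is inherited from $M_2$.

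The main obstacle, and the real content of the Nelson--Oppen argument, is the cardinality-matching step together with the alignment of the common variables through $h$: stable infiniteness is exactly what lets one escape the possibility that the two theories force incompatible domain sizes, and the arrangement formula $\rho(V,E)$ is exactly what lets the partial bijection on named elements extend coherently. In the many-sorted setting the same construction must be performed per sort, with stable infiniteness and a bijection $h$ supplied independently for each one; apart from this per-sort bookkeeping, the argument is identical.
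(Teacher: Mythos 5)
Your argument is correct, and it is essentially the standard correctness proof of the Nelson--Oppen combination method (as in Tinelli--Harandi or Bradley--Manna): the paper itself does not prove this lemma at all, but simply cites it as ``the non-algorithmic core'' of the method, so your proposal supplies the omitted argument rather than diverging from one. Both directions are sound: the forward direction by reading off $E$ from a combined model and taking reducts, and the backward direction by the usual cardinality-matching via stable infiniteness plus L\"owenheim--Skolem, the arrangement $\rho(V,E)$ guaranteeing that the partial map $v\mapsto(M_1(v),M_2(v))$ induces a well-defined injection that extends to a bijection $h$, and amalgamation by pushing the $\Sigma_1$-structure forward along $h$ (legitimate because equality is the only shared symbol and is preserved by any bijection). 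One small slip to repair: when you define the fused valuation you set $\alpha(v)=M_2(v)$ across the board, but for variables occurring only in $\phi_1$ (not in $V$) you must instead set $\alpha(v)=h(M_1(v))$, since $M_2$ places no meaningful constraint on them; with that fix the transfer of $\phi_1$ along the isomorphism goes through as you describe. Your closing remark about performing the construction per sort is also apt, since the paper applies the lemma in a many-sorted setting, and note that stable infiniteness as defined there applies because $\phi_i\wedge\rho(V,E)$ is quantifier-free ($\phi_i$ being conjunctive).
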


Every $(\Sigma_1\cup \Sigma_2)$-formula $\phi$ can be brought into an equi-satisfiable formula of the form $\phi_1\wedge\phi_2$, where $\phi_1$ is a $\Sigma_1$-formula, and  $\phi_2$ is a $\Sigma_2$-formula using the so-called variable abstraction phase of the Nelson-Oppen method. In our case,  $T_1$ is the theory of real closed fields, and $T_2$ the theory of uninterpreted function symbols. For the result  \[X(t)\geq 1 \wedge X(t+1)^2\leq 1,\] of translating the first example from the beginning of the section, the result of the variable abstraction phase is the equi-satisfiable formula
\[v_1\geq 1 \wedge v_2=t+1\wedge v_3^2\leq 1 \wedge v_1=X(t) \wedge v_3=X(v_2).\]
The common variables are $\{ v_1,v_2,v_3, t\}$, and  the equivalence relation induced by the set of equivalence classes~$\{\{ v_1,v_3\}, \{v_2\}, \{ t\}\}$ illustrates Lemma~\ref{lem:nelson_oppen}, since \[v_1\geq 1 \wedge v_2=t+1\wedge v_3^2\leq 1 \wedge v_1=v_3\wedge v_1\neq v_2\wedge v_3\neq v_2\]
is satisfiable in the theory of real-closed fields, and
 \[v_1=X(t) \wedge v_3=X(v_2) \wedge v_1=v_3\wedge v_1\neq v_2\wedge v_3\neq v_2\]
is satisfiable in the theory of uninterpreted function symbols.

The second lemma states a Hermite-like interpolation property whose proof follows from standard techniques in mathematical analysis. 
\begin{lemma}
  \label{lem:interpolation}
  Let $p$ be a function from a finite subset $P$ of $\RR^n\times\NN_0^n$ to $\RR$. Then there exists a smooth function $F:\RR^n\rightarrow\RR$ s.t. for every $(x,d)\in P$, $(D^d F)(x)=p(x,d)$.
\end{lemma}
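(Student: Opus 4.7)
The plan is to combine a local polynomial interpolation at each prescribed point with a smooth partition-of-unity-style gluing. Let $X \subseteq \RR^n$ be the projection of $P$ onto its first component, a finite set; for each $x \in X$, let $D_x := \{ d \in \NN_0^n : (x,d) \in P \}$, also finite. The idea is to build, separately for each $x \in X$, a polynomial matching all prescribed derivatives at $x$, and then to paste these polynomials together using smooth bump functions whose supports are disjoint.

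For the local step, for each $x \in X$ I would define the polynomial
\[
Q_x(y) \;:=\; \sum_{d \in D_x} \frac{p(x,d)}{d!}\,(y-x)^d,
\]
using the standard multi-index notation $(y-x)^d = \prod_{i=1}^n (y_i - x_i)^{d_i}$ and $d! = \prod_i d_i!$. A direct calculation shows that $D^{d'}\bigl[(y-x)^d/d!\bigr]$ evaluated at $y=x$ equals the Kronecker delta $\delta_{d',d}$: if any component of $d'$ exceeds that of $d$ the result is identically zero, if any component is strictly smaller the result still contains a factor $(y_i - x_i)^{d_i - d'_i}$ that vanishes at $x$, and if $d' = d$ the product telescopes to $d!$. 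Hence $(D^{d'} Q_x)(x) = p(x,d')$ for every $d' \in D_x$.

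For the gluing step, choose $\varepsilon > 0$ small enough that the open balls $B_\varepsilon(x)$, $x \in X$, are pairwise disjoint, which is possible because $X$ is finite. For each $x \in X$, pick a smooth bump function $\phi_x : \RR^n \to [0,1]$ which is identically $1$ on some open neighborhood $U_x \subseteq B_\varepsilon(x)$ of $x$ and which vanishes outside $B_\varepsilon(x)$; the existence of such bump functions is a standard fact of smooth analysis. Define
\[
F(y) \;:=\; \sum_{x \in X} \phi_x(y)\, Q_x(y).
\]
This $F$ is smooth as a finite sum of products of smooth functions. Because the supports of the $\phi_x$ are disjoint, on each $U_x$ we have $F \equiv Q_x$, and therefore $(D^d F)(x) = (D^d Q_x)(x) = p(x,d)$ for every $(x,d) \in P$, as required.

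I do not expect a real obstacle here: the argument reduces to the observation that the scaled Taylor monomials $(y-x)^d/d!$ act as a dual basis for the derivative functionals $D^{d'}|_{y=x}$ at a single point, together with the standard existence of smooth compactly supported bump functions on $\RR^n$. The only thing to keep an eye on is the multi-index calculation verifying the delta property of the monomials, and the fact that finiteness of $X$ is essential to obtain pairwise disjoint supports without resorting to a genuine (locally finite) partition of unity.
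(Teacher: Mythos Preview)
Your proof is correct and follows essentially the same approach as the paper: construct at each point of the finite projection $X$ a local function matching the prescribed derivatives there, then glue via disjointly supported bump functions. Your version is in fact more explicit than the paper's, since you exhibit the local interpolant concretely as the Taylor-type polynomial $Q_x(y)=\sum_{d\in D_x} \frac{p(x,d)}{d!}(y-x)^d$, whereas the paper merely asserts the existence of such an $f_c$.
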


\Comment{Probably this is a standard result from the literature on approximation theory (probably constructing not only a smooth function, but even a polynomial cf. Lagrange interpolation, Hermite interpolation etc.). Still we provide a proof here:}

\begin{proof}
  Let $X$ be the set $\{ x \mid (x,d)\in P \}$. This set if finite, and hence the elements of $X$ are isolated. For each  $c\in X$, construct a smooth function~$f_c$ that for all $d$ with $(c,d)\in P$,  $(D^{d} f_c)(c)=p(c,d)$. Let $F:\RR^n\rightarrow\RR$ be such that for all $x\in\RR^n$, $F(x)=\sum_{c\in X} B_c(f_c(x))$, where $B_c$ is a smooth function that is equal to the identity function in a sufficiently small neighborhood of $c$, and zero around all other elements of $X$ (i.e., a so-called bump function).  Then $F$ satisfies the desired property. \Qed
\end{proof}

Now we return to the proof of Theorem~\ref{thm:equisat}:

\begin{proof}
  To prove the $\Rightarrow$ direction,
  we assume a variable assignment $\alpha$ that $\mathcal{F}$-satisfies $\phi$ and construct an interpretation that satisfies both the axioms of $\TheoryRU$ and the formula $\tau(\phi)$. The interpretation is based on the structure of the real numbers, interprets the  symbols $\{0, 1, +,\times, \leq, <, \geq, >  \}$ in the usual mathematical way, interprets the function symbols introduced by the translation $\tau$ as the corresponding smooth real-valued functions given by $\alpha$ and their respective derivatives, and assigns to the variables of $\tau(\phi)$ the corresponding real values given by~$\alpha$.
The result satisfies the formula~$\phi$ by construction and satisfies the axioms of $\TheoryRU$ since the real numbers are an instance of the theory of real closed fields, 
  
We are left with proving the $\Leftarrow$ direction. For this, we assume that $\tau(\phi)$ is $\TheoryRU$-satisfiable, and build an assignment that satisfies $\phi$, assigning to each variable of $\phi$ an element of the domain of its respective sort (i.e., either a real number or a smooth real function).

Observe that both the theory of real closed fields and the theory of uninterpreted function symbols are stably infinite. Hence we can apply the Nelson-Oppen method.
 Let the formulas $\pi_R$ and $\pi_U$ be the result of applying the variable abstraction phase of the Nelson-Oppen method to $\tau(\phi)$. Hence, $\pi_R$ is a formula in the language of real closed fields, and $\pi_U$ a formula in the language of uninterpreted function symbols  s.t. $\pi_R\wedge\pi_U$ is $\TheoryRU$-equi-satisfiable with $\tau(\phi)$.
 Let $V$ be the common variables of $\pi_R$ and $\pi_U$, and let $E$ be the equivalence relation on $V$ ensured by Lemma~\ref{lem:nelson_oppen}.  Then  $\pi_R\wedge\rho(V,E)$ is satisfiable in the theory of real closed fields and $\rho(V,E)\wedge\pi_U$ in the theory of uninterpreted function symbols. Note that those theories are defined axiomatically, and hence  satisfying interpretations do not necessarily have to be based on real numbers.


 The theory of real closed fields is complete, and hence all its models are elementary equivalent (i.e., all models satisfy the same sentences in the language of real closed fields). 
 As a consequence, there is an interpretation~$I_R$ that satisfies $\pi_R\wedge\rho(V,E)$ and assigns  real numbers to all variables. 

Also $\rho(V,E)\wedge\pi_U$ has a satisfying interpretation. However, since the theory of uninterpreted symbols is not complete, we need a more involved construction to come up with an interpretation assigning real numbers and real-valued functions.

We observe that for a formula that is satisfiable in the theory of uninterpreted function symbols, the congruence closure algorithm~\cite{Nelson:80,Bradley:07} constructs a satisfying interpretation. Its domain is formed by  equivalence classes $T_{\sim}$ of the set of sub-terms~$T$ of the given formula. This domain is finite since the set $T$ is finite. Moreover, each equivalence class contains only finitely many terms. Let $I_U$ be such an interpretation  satisfying  the formula~$\rho(V,E)\wedge\pi_U$. Observe that the arrangement $\rho(V,E)$ ensures that all variables shared by $\pi_R$ and $\pi_U$ belong to the same equivalence class iff they have the same value in $I_R$.

We will combine the interpretations $I_R$ and $I_U$ into an interpretation~$I$ that  satisfies $\tau(\phi)$ and, in addition, uses the real numbers as its domain.
Hence, we will translate the elements in the domain of $I_U$ to real numbers, and extend them to real-valued functions corresponding to the uninterpreted function symbols.

We will now define a function $r$ assigning to each 
equivalence class in $T_{\sim}$ a distinct real number. 
Let this function $r$ be such that it assigns
to each equivalence class containing a variable from $\pi_R$ the value of this variable in $I_R$ (as we have observed, this is unique over all such variables belonging to the same equivalence class), and to all other equivalence classes a further, distinct real number.
Let $r': T\rightarrow\mathbb{R}$ s.t. for every term $t\in T$, $r'(t)$ is the real number that $r$ assigns to the equivalence class containing $t$.


Based on this, let $I$ be the following interpretation, which assigns real numbers to all variables in $\tau(\phi)$ and partial real functions to the function symbols in $\tau(\phi)$:
\begin{itemize}
\item for every variable $x$ occurring in $\pi_R$, $I(x):= I_R(x)$,
\item for every variable $x$ occurring in $\pi_U$, $I(x):= r'(x)$,
\item for every function symbol $X$ of arity~$k$, let $I(X)$ be the partial function such that for every term of the form $X(t_1,\dots, t_k)$ in  $T$,
  \[I(X)(r'(t_1),\dots, r'(t_k)):= r'(X(t_1,\dots,t_k)),\] and $I(X)$ is undefined for all other values.
\end{itemize}
The two following observations make this definition well-formed:
\begin{itemize}
\item The first two items overlap. This is no problem since for shared variables, $I(x)$ and $r'(x)$ coincide. 
\item The definition in the third item is unique since due to the congruence axioms of the theory of free function symbols, for all $t_1,\dots, t_k,t_1'\dots,t_k'\in T$,
  $r'(t_1)=r'(t_1'), \dots, r'(t_k)=r'(t_k')$ implies \[r'(X(t_1,\dots,t_k))=r'(X(t_1',\dots,t_k')).\]
\end{itemize}


We will now build a variable assignment~$\alpha$ from $I$ such that $\alpha\models\phi$. For every scalar variable $x$, $\alpha(x):=I(x)$. For every function variable $V$, $\alpha(V)$ will be a smooth real-valued function whose values coincide with the values of the partial function $I(V)$ on all points where this partial function has a defined value, and whose derivatives coincides with the values of the corresponding partial function~$I(V_{i_1,\dots,i_n})$ on all points where the latter has a defined value. Such a function exists due to Lemma~\ref{lem:interpolation}, and it satisfies the formula $\phi$. \Qed
\end{proof}

To illustrate the theorem, we continue with the example from above. The real part of the formula is satisfiable, for example by  $\{ v_1\mapsto 1, v_2\mapsto 7, v_3\mapsto 1, t\mapsto 6 \}$. Applying the congruence closure algorithm to the part with uninterpreted function symbols, we work with the set of sub-terms $T=\{ v_1, v_2, v_3, t, X(t), X(v_2)\}$. The result of the congruence closure algorithm is the equivalence relation $\{\{ v_1, v_3, X(v_2), X(t) \}, \{t\}, \{ v_2 \}\}$. Hence $r'$ is  $\{ v_1\mapsto 1, v_3\mapsto 1, X(v_2)\mapsto 1, X(t)\mapsto 1, t\mapsto 6,v_2\mapsto 7\}$. Since every variable in $\pi_U$ also occurs in $\pi_R$, the interpretation $I$ can simply agree with $r'$
on the scalar variables. Moreover, it assigns to the function variable~$X$ the partial function $\{ 7\mapsto 1, 6\mapsto 1\}$. The corresponding assignment $\alpha$ assigns to the scalar variables the same real values as $I$, and assigns to $X$ a smooth interpolation of the  partial function $\{ 7\mapsto 1, 6\mapsto 1\}$. For example, this could be the constant function that has the value $1$, everywhere.
\Comment{if there will be some time at the end, make this example more interesting}


Since a disjunction of formulas is satisfiable, if one of the constituting disjuncts is satisfiable, we get:

\begin{corollary}
  \label{cor:decidable}
  The quantifier free, function-algebraic theory of real functions is decidable.
\end{corollary}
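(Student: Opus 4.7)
The plan is to reduce the general quantifier-free case to the conjunctive case already handled by Theorem~\ref{thm:equisat}. Given an input quantifier-free, function-algebraic formula $\phi$, I would first push negations down to the atoms, rewriting $\neg(s=t)$ as $s\neq t$, $\neg(s\leq t)$ as $s>t$, and so on. Since the predicate set $\{=,\leq,\geq,<,>\}$ is closed under negation in this way, the result remains function-algebraic. Then convert the formula into disjunctive normal form, yielding an equivalent formula $\phi_1\vee\dots\vee\phi_m$ in which each $\phi_i$ is a conjunction of literals over the allowed signature, i.e., a conjunctive formula in the sense required by Theorem~\ref{thm:equisat}.

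Next, since the semantics of disjunction in the structure $\Struct$ is the usual one, $\phi$ is $\mathcal{F}$-satisfiable if and only if at least one disjunct $\phi_i$ is $\mathcal{F}$-satisfiable. Apply the translation $\tau$ to each $\phi_i$ and invoke the Nelson--Oppen decision procedure for $\TheoryRU$ to decide whether $\tau(\phi_i)$ is $\TheoryRU$-satisfiable; by Theorem~\ref{thm:equisat}, this answer coincides with the $\mathcal{F}$-satisfiability of $\phi_i$. The overall algorithm declares $\phi$ satisfiable if and only if at least one of the $m$ subcalls returns satisfiable, and unsatisfiable otherwise. This terminates because each subcall terminates and $m$ is finite.

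The main obstacle is essentially already handled: the semantic/axiomatic gap between $\mathcal{F}$ and $\TheoryRU$ is precisely what Theorem~\ref{thm:equisat} overcomes, and decidability of $\TheoryRU$ itself is classical via Nelson--Oppen combined with decidability of the theory of real-closed fields. The only mild points to verify are that the DNF conversion stays within the function-algebraic fragment (which follows from closure of the predicate set under negation and the fact that no new function symbols or quantifiers are introduced) and that the translation $\tau$ is well-defined on a conjunction of literals, not only on a conjunction of atoms, which is immediate since $\tau$ operates purely by replacing maximal $\partial$-terms and app-terms and is oblivious to the Boolean structure above the atoms.
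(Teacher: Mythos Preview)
Your proposal is correct and follows essentially the same approach as the paper: the paper simply observes that a disjunction is satisfiable iff one of its disjuncts is, and thus reduces to the conjunctive case handled by Theorem~\ref{thm:equisat}. Your version spells out the DNF conversion and the closure-under-negation details more explicitly, but the underlying idea is identical.
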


The proof of Theorem~\ref{thm:equisat} also shows  how to compute satisfying assignments: After checking the satisfiability of $\tau(\phi)$ using the congruence closure algorithm and the Nelson-Oppen combination procedure,  construct the variable assignment $\alpha$ defined in the proof.

\Comment{Turn this into an explicit algorithm, add an example, that shows the power of the method as much as possible}

Note that the concluding building block of the proof of Theorem~\ref{thm:equisat} is Lemma~\ref{lem:interpolation}. Any analogous lemma that ensures stronger properties of the constructed functions results in a corresponding strengthening of Theorem~\ref{thm:equisat}. For example, we could also be interested in constructing functions that generalize the constraints given by the input formula as much as possible, maximizing certain regularity properties~\cite{Fefferman:20}.

\section{Scalar Quantification}
\label{sec:scal-quant}

We will now allow arbitrary quantification on scalar variables, but not on function variables. We will still require formulas to be function-algebraic---a restriction that we will remove in the next section.
An example of a formula with quantification on scalar variables is
  \[      \forall t \;\exists t' \;.\; \SApp{X}{t,2t}^2+1\geq \SApp{\partial_2 X}{t',t'},\]
  where $X\in\mathcal{F}_2$.  This formula is satisfiable by the assignment $\{ X\mapsto \lambda x\:.\: 0\}$.

  Many problems resulting from the synthesis of correctness certificates for continuous systems (e.g., Lyapunov function~\cite{Khalil:02}, barrier certificates~\cite{Prajna:04} and their generalizations~\cite{Platzer:18,Han:21,Edwards:24}), belong to the class under discussion. It allows the formulation of initial value problems such as
\[
  \SApp{X}{0}=1\wedge\forall t\;.\; t\geq 0 \Rightarrow \SApp{\dot{X}}{t}=\SApp{X}{t}^2,
  \]
  which is unsatisfiable, since the solution of this initial value problem grows over all bounds as $t$ approaches $1$. In general, the existence of (time-unbounded) solutions of polynomial initial value problems is undecidable~\cite{Graca:08a}, which shows that the 
satisfiability of the class formulas under discussion is undecidable. However, we will see that it is still possible to provide a certain characterization of computability based on a robustness requirement on the input formula.

In Subsection~\ref{sec:prov-satisf}, we will introduce a method for checking satisfiability under the condition that the functional variables are instantiated to fixed, user-provided polynomials. In Subsection~\ref{sec:search-algorithm} we will introduce an algorithm for checking satisfiability that systematically searches for such instantiations.
In Subsection~\ref{sec:robust-satisf}, we will introduce a robustness property of formulas that will allow us to characterize termination of this algorithm. And in Subsection~\ref{sec:completeness}, we will show how the introduced robustness property
ensures success of the algorithm introduced in Subsection~\ref{sec:search-algorithm}.

\subsection{Checking Satisfiability Under Polynomial Instantations}
\label{sec:prov-satisf}

The method for checking satisfiability of a formula that we introduce in this subsection depends on an instantation of its functional variables to polynomials with rational coefficients. This will allow us to rewrite the formula into a formula in the language of the theory of real-closed fields which is decidable~\cite{Tarski:51}. In this subsection, we still assume these polynomials to be given (e.g., chosen by the user), and drop this assumption later.

\begin{definition}
  \label{def:poly_eval}
  Let $V$ be a finite set of function variables. We call a function~$\pi$ that assigns to every function variable in $V$ with sort $\mathcal{F}_n$  a  polynomial  in the variables $t_1,\dots,t_n$ with rational coefficients a \emph{polynomial assignment} on $V$. Moreover, we call a pair consisting of a formula with function variables from a finite set $V$ and a polynomial assignment on $V$ an \emph{instantiated formula}. 
\end{definition}

A polynomial assignment $\pi$ in an instantiated formula $(\phi, \pi)$ instantiates each function variable in $\phi$ to the respective polynomial assigned by $\pi$, but does not assign values to scalar variables. Since the function variables have values that are polynomials with rational coefficients, a polynomial assignment allows us to evaluate terms symbolically. For example, a symbolic evaluation of the term \[\SApp{X}{q}+p^2\SApp{\partial_1 X}{r}\SApp{Y}{q}\] for the polynomial assignment $\{ X\mapsto t^2+1, Y\mapsto t \}$ looks as follows:
\newcommand{\Poly}[1]{#1}
\[ \begin{array}{l}
\SApp{\Poly{t^2+1}}{q}+p^2\SApp{\partial_1 \Poly{t^2+1}}{r}\SApp{\Poly{t}}{q} \\
\SApp{\Poly{t^2+1}}{q}+p^2\SApp{\Poly{2t}}{r}\SApp{t}{q}\\
\Poly{q^2+1+2p^2rq}
  \end{array} \]


The result is a polynomial in the scalar variables of the term. Now denote by $\Pi_{\pi}(\phi)$ the result of applying symbolic term evaluation to all terms in the formula $\phi$. For example,
\[\Pi_{\pi}(\forall p\forall q\;.\; \SApp{X}{q}+p^2\SApp{\partial_1 X}{r}\SApp{Y}{q}\geq 0)\] with $\pi$ being the polynomial assignment from above, is
\[ \forall p\forall q\;.\; \Poly{q^2+1+2p^2rq}\geq 0.\]

Polynomial evaluation completely eliminates any function variables or operators, and hence for every instantiated formula $(\phi, \pi)$, $\Pi_\pi(\phi)$ neither contains a function variable, nor an app- or diff-operator.  Therefore,  $\Pi_{\pi}(\phi)$ is a formula in the language of real-closed fields, which is decidable.

Moreover, instantiated formulas can be used for proving satisfiability:

\begin{property}
  \label{prop:instantiation}
  Let $(\phi, \pi)$ be an instantiated formula.
  Then  $\phi$ is satisfiable by a variable assignment coinciding with $\pi$ on the elements where it is defined if and only if $\Pi_{\pi}(\phi)$ is satisfiable.
  
\end{property}


So we have reduced the satisfiability checking problem to the problem of finding a polynomial assignment $\pi$ for which $\Pi_{\pi}(\phi)$ is satisfiable. However, for some satisfiable formulas, the search for such a polynomial assignment is bound to fail. This can be easily seen on the simple initial value problem
\[
  \SApp{X}{0}=1\wedge\forall t\;.\; t\geq 0 \Rightarrow \SApp{\dot{X}}{t}=\SApp{X}{t}
  \]
that is satisfiable, but not by any polynomial assignment (the only solution of the given initial value problem is the exponential function).


\subsection{Searching for a Satisfying Polynomial Assignment}
\label{sec:search-algorithm}

Based on the observation that the set of polynomials with rational coefficients is recursively enumerable\Comment{see also the notion of a separable metric/topological space}, for a set of function variables $V$, we denote by $\pi_0^{V},\pi_1^{V},\dots$ an enumeration of all polynomial assignments on $V$. This allows for a trivial way of checking satisfiability of a formula, which is shown in Algorithm~\ref{alg:check}. 
\begin{algorithm}[htb]
  \caption{Search for a Satisfying Polynomial Assignment}
  \label{alg:check}
  \KwIn{formula $\phi$ with function variables from a finite set $V$}
  \DontPrintSemicolon
  $i\leftarrow 0$\\
  \While{$\Pi_{\pi^V_i}(\phi)$ is unsatisfiable}{
    $i\leftarrow i+1$
  }
  \Return{``satisfiable''}
\end{algorithm}

Due to Property~\ref{prop:instantiation}, the algorithm is correct, and if it returns ``satisfiable'', the assignment satisfying the input formula is given by the union of $\pi_i$ (for the function variables) and the assignment satisfying $\Pi^V_{\pi_i}(\phi)$ (for the scalar variables). However, the algorithm only terminates for formulas that are satisfiable by a polynomial assignment. In the rest of this section, we will introduce a natural and practically relevant criterion for this being the case.

\subsection{Robust Satisfiability}
\label{sec:robust-satisf}

Even though differential equations such as $\dot{x}=x$ (in our notation: $\forall t\;.\; \SApp{\dot{X}}{t}=\SApp{X}{t}$) are ubiquitous in mathematics, they are highly idealized objects: In practice, no real physical system will satisfy such an equation precisely, and concrete differential equations can only be used in applications after introducing many simplifying assumptions that are part of the daily bread of practical engineering. However, this also makes it necessary for engineers to assess the consequences of such simplifications. Despite the existence of powerful deductive verification techniques~\cite{Platzer:18,Foster:21} and conservative approximation techniques based on interval arithmetic~\cite{Moore:09,Zgliczynski:02}, in practice, differential equations are still solved by algorithms that produce approximation errors both due to discretization and  due to floating point computation. The reliability of the whole process depends essentially on the fact that the error made by the solver does not dominate the error made by simplifying assumptions. This is a major complication, reliable error analysis of numerical algorithms being a sophisticated extensive research area on its own~\cite{Hairer:87,Suli:03}.

Such complications could be avoided if solvers could conservatively bound the produced errors. For the concrete example $\dot{x}=x$, it would be very useful, if a solver could---instead of solving the differential equation approximately---guarantee the solution of  $x-\epsilon\leq\dot{x}\leq x+\epsilon$ within a compact set, for a small constant $\epsilon>0$.
In this section, we will formally characterize such situations and show that in such cases,  a formally correct satisfiability check is not only possible, but that we can even guarantee its success.

For being able to measure the distance between variable assignments, we will adjoin metrics  to the set of variable assignments $\VarAss$, making the pair $(\VarAss, d)$ a metric space. These metrics will be parametric in a family of compact sets $K_n\subseteq\mathbb{R}^n,n\in\mathbb{N}$ which we will call \emph{domain of interest}. We will denote this dependence on the domain of interest by an index, writing $d_K$ for the metric associated to domain of interest $K$. We will call such a metric on $\VarAss$ a \emph{variable assignment metric}.

\begin{definition}
  \label{def:robsat}
  A formula $\phi$ is \emph{semantically robustly satisfiable} wrt. a variable assignment metric $d$  iff there is a variable assignment $\alpha\in\VarAss$ and an $\varepsilon>0$ (that we call the \emph{robustness margin}) such that for every $\alpha'$ with $d_K(\alpha,\alpha')<\varepsilon$, $\alpha'\models\phi$.
\end{definition}

Note that unlike similar definitions~\cite{Gao:12,Ratschan:02b}, this definition only depends on the semantics of a given formula, but not on its syntax, and hence is invariant wrt. equivalence transformations. We will later see that this is made possible by the fact that we restrict ourselves to operations on real numbers allowed by the decidable theory of real closed fields.

We will usually use metrics induced by some norm, and so we will call a formula robustly satisfiable wrt. a norm~$|| \cdot ||_K$ iff it is robustly satisfiable wrt. the metric $d_K(x,x')= || x-x' ||_K$. Given metrics $d^{\mathcal{T}}$ on $\mathcal{T}$, where  $\mathcal{T}\in \{ \mathcal{R}, \mathcal{F}_1,\dots \}$, we define their extension to variable assignments element-wise. So, for $\alpha,\alpha'\in\VarAss$, 
\[
d_K(\alpha,\alpha'):=
    \max_{\mathcal{T}\in  \{ \mathcal{R}, \mathcal{F}_1,\dots \}}\max_{v\in\mathcal{V}_{\mathcal{T}}} d_K^{\mathcal{T}}(\alpha(v), \alpha'(v)).\]
Here,  we will usually use a family of metrics on function variables of all dimensions. If $d^{\mathcal{R}}$ is a metric on $\mathbb{R}$ and $d^{\mathcal{F}}$ such a family of metrics on smooth functions $\mathbb{R}^i\rightarrow\mathbb{R}, i\in \mathbb{N}$, then we will denote this extension to variable assignments by 
 $d^{\mathcal{R}}\times d^{\mathcal{F}}$.

On real-numbers we will use the discrete metric  \[d^=(x,y):=
   \left\{\begin{array}{ll}
            1, & \text{if } x\neq y,\\
            0, & \text{if } x=y.\\
   \end{array}\right.\]
The metric on functions will be based on a norm measuring the size of a given function and of its derivatives. For a function $F: \mathbb{R}^n\rightarrow\mathbb{R}$, at least $k$-times differentiable,  let \[ || F ||_K^k:= \max_{ |\beta|\leq k} \inf_{x\in K_n} |(D^\beta F)(x)|.  \]
We denote the metric induced by this norm $|| \cdot ||_K^k$ by $d_K^k$. Here are some examples:
\begin{itemize}
\item $\forall t\in [0,1]\;.\; \SApp{X}{t}-0.1\leq \SApp{\dot{X}}{t}\wedge\SApp{\dot{X}}{t}\leq \SApp{X}{t}+0.1$, with $x\in\mathcal{F}_1$ is \emph{not} robustly satisfiable wrt. the norm $||\cdot ||_{[0,1]}^0$, since that norm does not constrain any derivative of $x$. However, it is robustly satisfiable wrt. $||\cdot||_{[0,1]}^1$ since, for example, every function with  maximal distance $0.01$ from the exponential function wrt. $||\cdot||_{[0,1]}^1$ satisfies the formula.

\item $\forall t\in[0,1] \;.\; \SApp{\dot{X}}{t}=\SApp{X}{0}$ is not robustly satisfiable wrt. $||\cdot||^0_{[0,1]}$, since for every function satisfying the formula adding the term $\epsilon t$ to the function results in a function not satisfying it. 
\item The formula $\forall t \;.\; \SApp{X}{t}\geq 0$, while satisfiable, is not robustly satisfiable wrt. the norm $||\cdot ||_{[0,1]}^0$, since this norm only restricts the value of functions in the domain of interest~$[0,1]$. Due to this, for every variable assignment $\alpha$ satisfying the formula, there is an $\alpha'$ with $d(\alpha,\alpha')=0$ that does not satisfy the formula: Simply choose an $\alpha'$ that is identical to $\alpha$ on~$[0,1]$ but reaches a negative value outside of this interval. In contrast to that, the formula $\forall t\in [0,1] \;.\; \SApp{X}{t}\geq 0$ is robustly satisfiable. This explains the importance of bounds on quantified variables for ensuring robustness, which usually is all right in practical applications.
\item The formula $\forall t\in [0,10] \;\exists t'\in [0,10] \;.\; \SApp{X}{t,2t}^2+1\geq \SApp{\partial_2 X}{t',t'}$, which is the result of adding bounds to the first example in this section, is robustly satisfiable wrt. $||\cdot ||_{[0,1]}^1$.
\end{itemize}

\subsection{Robust Completeness}
\label{sec:completeness}

We will now show how the robustness property from Subsection~\ref{sec:robust-satisf} can ensure termination of Algorithm~\ref{alg:check}. We will use the fact that one can approximate smooth functions on compact domains arbitrarily closely by polynomials. For this, recall that a subset $X'$ of a metric space $(X,d)$ is dense in $(X, d)$ iff for every $x\in X$ and $\varepsilon>0$ there is an $x'\in X'$ with $d(x,x')<\varepsilon$. 

\begin{lemma}
  \label{lem:robust}
  Let $\phi$ be a formula, $d$ a variable assignment metric and $\VarAss'\subseteq \VarAss$ s.t. $\VarAss'$ is dense in the metric space $(\VarAss, d)$.   Then every formula that is semantically robustly satisfiable wrt. the metric $d$ has a satisfying assignment from $\VarAss'$.
\end{lemma}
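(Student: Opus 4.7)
The plan is to give a very short density argument that essentially just chains together the two hypotheses. The robustness assumption gives us an open $d$-ball of satisfying assignments, and density inserts an element of $\VarAss'$ into that ball.

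In more detail, I would first unpack the definition of semantic robust satisfiability: this yields an assignment $\alpha\in\VarAss$ and a robustness margin $\varepsilon>0$ such that $\alpha'\models\phi$ for every $\alpha'\in\VarAss$ with $d(\alpha,\alpha')<\varepsilon$. In other words, the open $d$-ball $B(\alpha,\varepsilon)\subseteq\VarAss$ consists entirely of satisfying assignments. Next, I would invoke the density hypothesis: since $\VarAss'$ is dense in $(\VarAss,d)$, applied to $x=\alpha$ and $\varepsilon$ in the definition of density, there exists some $\alpha'\in\VarAss'$ with $d(\alpha,\alpha')<\varepsilon$. Finally, this $\alpha'$ lies in $B(\alpha,\varepsilon)$, so $\alpha'\models\phi$, and by construction $\alpha'\in\VarAss'$, which is the required satisfying assignment.

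There is not really a hard part here: the statement is essentially a tautology once both definitions are unfolded, and the only thing to watch out for is to make sure the metric used in the definition of robust satisfiability (written $d_K$ in the previous subsection) is identified with the metric $d$ appearing in the statement of the lemma (both refer to the same variable assignment metric, with the domain of interest fixed implicitly). The real work, of course, is deferred to the later application of this lemma, where one still has to exhibit a concrete dense subset $\VarAss'$ (namely, assignments whose function components are polynomials with rational coefficients) together with an appropriate metric $d$ for which density actually holds; but that is outside the scope of this particular lemma.
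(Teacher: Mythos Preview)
Your proposal is correct and mirrors the paper's proof almost verbatim: unfold the definition of robust satisfiability to obtain the ball $B(\alpha,\varepsilon)$ of satisfying assignments, then use density of $\VarAss'$ to pick an $\alpha'\in\VarAss'$ inside that ball. Your remarks about identifying $d$ with $d_K$ and about deferring the construction of a concrete dense subset are accurate and in line with how the paper proceeds.
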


\begin{proof}
  Assume that $\phi$ is semantically robustly satisfiable wrt. $d$. Then there is a variable assignment  $\alpha\in\VarAss$ and an $\varepsilon>0$ such that for every $\alpha'$ with $d(\alpha,\alpha')<\varepsilon$, $\alpha'\models\phi$. Since $\VarAss'$ is dense in $(\VarAss, d)$, there is an $\alpha'\in\VarAss'$ with $d(\alpha,\alpha')<\varepsilon$. Hence $\alpha'$ is within the robustness margin of $\alpha$, and hence $\alpha'\models\phi$.
\Qed\end{proof}


Now we observe:
\begin{lemma}
  \label{lem:ratpoly}
For every $k\in\mathbb{N}_0$, compact $K\subseteq\mathbb{R}^n$, 
the set of $n$-dimensional polynomial functions with rational coefficients is dense in the set of $n$-dimensional polynomial functions with real coefficients wrt. the metric $d_K^k$. 
\end{lemma}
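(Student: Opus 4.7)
The plan is a direct coefficient-approximation argument, exploiting the fact that once a polynomial's degree is fixed, it lives in a finite-dimensional space on which the norm $\|\cdot\|_K^k$ is controlled by the supremum norm on coefficients. Given a real-coefficient polynomial $P$ and any $\varepsilon>0$, I will construct a rational-coefficient polynomial $Q$ with $\|P-Q\|_K^k<\varepsilon$ by approximating each coefficient of $P$ by a nearby rational.

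More precisely, I would first write $P=\sum_{|\alpha|\leq d} c_\alpha x^\alpha$, where $d$ is the total degree of $P$ and the sum ranges over the finite set of multi-indices $\alpha\in\mathbb{N}_0^n$ of size at most $d$. For each such $\alpha$ I pick a rational $q_\alpha$ close to $c_\alpha$ and set $Q:=\sum_{|\alpha|\leq d} q_\alpha x^\alpha$. Then for every multi-index $\beta$ with $|\beta|\leq k$,
\[ D^\beta(P-Q) = \sum_{|\alpha|\leq d} (c_\alpha - q_\alpha)\, D^\beta x^\alpha, \]
and each $D^\beta x^\alpha$ is itself a polynomial, hence continuous on the compact set $K_n$ and therefore bounded by some finite $M_{\alpha,\beta}:=\sup_{x\in K_n}|D^\beta x^\alpha|$.

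Setting $C := \sum_{|\alpha|\leq d}\max_{|\beta|\leq k} M_{\alpha,\beta}$, a finite constant depending only on $K_n$, $k$ and $d$, a straightforward triangle-inequality estimate yields $\|P-Q\|_K^k \leq C\cdot \max_{|\alpha|\leq d} |c_\alpha - q_\alpha|$. Using density of $\mathbb{Q}$ in $\mathbb{R}$, I then choose each $q_\alpha\in\mathbb{Q}$ with $|c_\alpha - q_\alpha|<\varepsilon/(C+1)$, which gives $\|P-Q\|_K^k<\varepsilon$ as desired.

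There is no real obstacle here: the argument reduces density in a seemingly function-theoretic norm to density of $\mathbb{Q}$ in $\mathbb{R}$, because restricting to polynomials of the fixed degree of $P$ collapses the question to a finite-dimensional linear space on which the norm is manifestly continuous in the coefficient vector. The only mild care needed is to verify that the constant $C$ is legitimately finite, which follows from (i) the degree bound $d$ being determined by $P$, so that only finitely many multi-indices and derivatives appear, and (ii) compactness of $K_n$ together with continuity of polynomials.
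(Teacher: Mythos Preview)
Your proof is correct and follows essentially the same approach as the paper: both approximate the real coefficients of $P$ by rationals and bound $\|P-Q\|_K^k$ using finiteness of the monomial basis and compactness of $K$. The only cosmetic difference is that the paper packages the estimate via Cauchy--Schwarz on the coefficient and monomial vectors, whereas you use a direct triangle-inequality bound on the individual monomial derivatives.
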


\begin{proof}
  Let $P$ be a polynomial function with real coefficients and $\varepsilon>0$. We will prove that there is a polynomial $P'$ with rational coefficients such that $d_K^k(P, P')<\varepsilon$,  that is $\max_{ |\beta|\leq k} \inf_{x\in K} |(D^\beta P)(x)-(D^\beta P')(x)|<\varepsilon$. During this, we will separate any given polynomial $P$ into its vectors of coefficients $C_P$ and monomials $M_P$. Hence $P=C_P^T M_P$.

  Let $m=\max_{|\beta|\leq k}\inf_{x\in K} ||M_{D^\beta P}(x)||$ with $||\cdot||$ denoting the Euclidean metric. The value $m$ is finite since $K$ is compact. Let $P'$ be a polynomial with rational coefficients s.t. $M_P(x)=M_{P'}(x)$ and 
s.t. $\max_{|\beta|\leq k}||C_{D^\beta P} -C_{D^\beta P'}||<\frac{\varepsilon}{m}$. Now, due to Cauchy-Schwarz,
\begin{multline*}
    \max_{ |\beta|\leq k} \inf_{x\in K} |(D^\beta P)(x)-(D^\beta P')(x)|=
    \max_{ |\beta|\leq k} \inf_{x\in K} |(C_{D^\beta P} -C_{D^\beta P'})^TM_{D^\beta P}(x)|\leq\\
        \max_{ |\beta|\leq k} \inf_{x\in K} ||C_{D^\beta P} -C_{D^\beta P'}||\;||M_{D^\beta P}(x)||<
        \frac{\varepsilon}{m}m=\varepsilon
  \end{multline*}\Qed
\end{proof}


Moreover, the classical Stone-Weierstrass theorem generalizes to $d_K^k$, that is,
the polynomials with real coefficients $P(\mathbb{R})$ are dense in the set of $C_k$-real functions on any compact set~$K$ wrt.  $d_K^k$~\cite{Veretinnikov:16,Narasimhan:85}. 
This allows us to conclude:
\begin{lemma}
    \label{lem:dense}
    For every $k\in\NN_0$ and compact $K\subseteq\RR^n$, the set of $n$-dimensional polynomial functions with rational coefficients is dense in the set of $n$-dimensional smooth functions wrt. the metric $d_K^k$.
\end{lemma}



Now we can apply this to characterize termination of Algorithm~\ref{alg:check}. The algorithm handles scalar variables directly, using a decision procedure for real-closed fields, and hence we need to ensure robustness only wrt. function variables. This amounts to measuring distance on scalar variables using the discrete metric~$d^=$. Hence we get\Comment{use term ``quasi-decidable''?}:

\begin{theorem}
  \label{thm:quasi_dec}
  For every $k\in\NN_0$ and family of compact sets $K_n\subseteq\RR^n$, $n\in\mathbb{N}$, Algorithm~\ref{alg:check} terminates for every input formula that is semantically robustly satisfiable wrt. the metric $d^=\times d_K^k$, proving satisfiability of the input formula.
\end{theorem}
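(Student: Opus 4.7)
The plan is to build the algorithm as a dovetailed search over all polynomial assignments with rational coefficients, and use the three preceding results to guarantee that the search succeeds whenever the input formula is robustly satisfiable. Concretely, fix some computable enumeration $\pi_1, \pi_2, \dots$ of all polynomial assignments $\pi$ (which is possible because, for a given input formula $\phi$, there are only finitely many function variables, each of a fixed arity, and polynomials over $\mathbb{Q}$ in finitely many variables form a recursively enumerable set). At stage $i$, form the instantiated formula $(\phi, \pi_i)$, compute $\Pi_{\pi_i}(\phi)$ by iterated application of the three rewrite rules (which is effective because each step is a symbolic differentiation or polynomial evaluation), and decide $\Pi_{\pi_i}(\phi)$ in the theory of real closed fields using Tarski's decision procedure. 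If the decision is positive, output ``satisfiable'' and halt; otherwise, move to stage $i+1$.

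Soundness of the algorithm is immediate from Theorem~\ref{thm:instantiation}: whenever some $\Pi_{\pi_i}(\phi)$ is satisfiable in the real closed fields, the assignment that interprets each function variable $X$ as $\pi_i(X)$ together with the witnessing scalar values is a genuine satisfying assignment for $\phi$ in $\mathcal{F}$.

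The remaining task is to argue termination on every robustly satisfiable input. Here the idea is to apply Lemma~\ref{lem:robust} with $\VarAss'$ being the subset of assignments whose function components are polynomials with rational coefficients (and whose scalar components are arbitrary). To invoke that lemma, I first need to verify that $\VarAss'$ is dense in $(\VarAss, d^= \times d_K^k)$: given any $\alpha \in \VarAss$ and $\varepsilon > 0$, build $\alpha'$ by copying all scalar values of $\alpha$ verbatim (which keeps their discrete-metric distance at $0$) and, for each function variable, picking a rational-coefficient polynomial within distance $\varepsilon$ in $d_K^k$, which is possible by Lemma~\ref{lem:dense}. Hence $d^= \times d_K^k(\alpha, \alpha') < \varepsilon$, so $\VarAss'$ is dense. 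By Lemma~\ref{lem:robust}, any robustly satisfiable $\phi$ has a satisfying assignment $\alpha^\star \in \VarAss'$, whose function components form a polynomial assignment $\pi^\star$ with rational coefficients. Because $\alpha^\star \models \phi$, we have $\pi^\star \models \overline{\exists}_{\mathcal{R}}\phi$, so by Theorem~\ref{thm:instantiation} the formula $\Pi_{\pi^\star}(\phi)$ is satisfiable; since the enumeration meets $\pi^\star$ at some finite stage, the algorithm halts there at the latest.

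The main obstacle I expect is simply checking that the density argument correctly handles the asymmetric product metric: the discrete factor $d^=$ on scalar variables forces scalar values to be preserved exactly under any nontrivial $\varepsilon$-perturbation, so the construction of $\alpha'$ must leave scalars untouched, and the existential closure $\overline{\exists}_{\mathcal{R}}$ in Theorem~\ref{thm:instantiation} is precisely what absorbs those exact scalar witnesses into the satisfiability of the real-closed-field sentence $\Pi_{\pi^\star}(\phi)$. Everything else is bookkeeping: enumerability of rational-coefficient polynomial assignments, effectiveness of the rewriting procedure defining $\Pi_\pi(\phi)$, and decidability of the resulting real-closed-field formula via Tarski.
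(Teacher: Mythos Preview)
Your proposal is correct and follows essentially the same approach as the paper: enumerate rational-coefficient polynomial assignments, test each via $\Pi_\pi(\phi)$ and Tarski's procedure (Theorem~\ref{thm:instantiation}), and use density (Lemma~\ref{lem:dense}) together with Lemma~\ref{lem:robust} to guarantee termination on robustly satisfiable inputs. Your version is in fact more explicit than the paper's in spelling out soundness, the construction witnessing density of $\VarAss'$ under the product metric $d^=\times d_K^k$, and the role of the existential closure in absorbing the scalar witnesses.
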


\begin{proof}
  Assume a semantically robustly satisfiable function-algebraic formula $\phi$ with function variables from a finite set~$V$.  Since by Lemma~\ref{lem:dense}, the set of variable assignments assigning real values to scalar variables and polynomials with rational coefficients to function variables is dense in the metric space $(\VarAss,d^=\times d_K^k)$, by Lemma~\ref{lem:robust}, there is an assignment $\alpha\in\VarAss$ that assigns polynomials with rational coefficients to function variables and that satisfies the formula~$\phi$, and hence also $\alpha\models\overline{\exists}_{\mathcal{R}}\phi$. Let $\alpha^\pi$ be the polynomial assignment that is the result of restricting $\alpha$ to $V$. Then, due to Property~\ref{prop:instantiation}, $\Pi_{\alpha^\pi}(\phi)$ is satisfiable. Moreover, since the sequence $\pi_0^V, \pi_1^V,\dots$ enumerates the polynomial assignments on $V$, there is an $i$ such that $\pi_i^V=\alpha^\pi$, which ensures that Algorithm~\ref{alg:check} terminates.  \Qed
\end{proof}

Note however, that we do not know how to check a given formula for robustness. Hence, for a given formula we do not know a-priori whether the enumeration algorithm from the proof of Theorem~\ref{thm:quasi_dec} will terminate. We just know that it will terminate \emph{under the assumption} that the formula is robust. Note moreover, that we use the domain of interest $K_n$ only for characterizing the behavior of Algorithm~\ref{alg:check}, but \emph{not} for actual computation. Hence, we also do not need any restriction on the computability of $K_n$.

\Comment{After strengthening~Lemma~\ref{lem:interpolation} to polynomials (which probably is available in the literature), and polynomials with real algebraic coefficients, due to enumerability, the algorithm also terminates in the special case of Section~\ref{sec:quantifier-free-case} which provides an alternative proof of the sat side of the proof in that section. }

As an example, consider exponentially stable ordinary differential equations. These have Lyapunov functions that are robust wrt. the metric induced by the norm\footnote{See the proof of Theorem~9 of Peet~\cite{Peet:09}.} 
\[ || F ||_K^k:= \max_{ |\beta|\leq 2} \inf_{x\in K_n} \frac{|(D^\beta F)(x)|}{|x^T x|}.\]
Moreover, the polynomial real-valued functions are dense wrt. this norm~\cite[Theorem~8]{Peet:09}. Even more, an inspection of the proof of this result shows\Comment{In more detail,
  Lemma~\ref{lem:ratpoly} can be applied to approximate the polynomial $q$ in the proof of Lemma~6 of Peet by a rational polynomial, the constant and linear terms of $m$ are zero for Lyapunov functions, and the norm of the quadratic term wrt. the relevant metric can be made arbitrarily small.}, that indeed, polynomial real-valued functions with \emph{rational} coefficients suffice. Hence our approach turns Peet's purely existential statement~\cite[Theorem~11]{Peet:09} into an algorithmic one\Comment{refer to this from the intro?}:
\begin{corollary}
  Lyapunov functions of exponentially stable ordinary differential equations are computable.
\end{corollary}
Note that exponential stability---as opposed to mere asymptotic stability---is indeed what matters in practice, since asymptotically, but not exponentially stable systems are structurally unstable~\cite{Glas:87}.\Comment{Also refer to \cite{Gao:19}? What about exponentially stable ordinary differential equations with transcendental function symbols?}

\section{General Computable Function Symbols}
\label{sec:transcendental}

\newcommand{\VarAbstr}[1]{\mathit{TAbstr}(#1)}
\newcommand{\VarAbstrAppr}[2]{\mathit{TAbstr}^{#2}(#1)}
\newcommand{\ErrorVar}[1]{\VarAbstrAppr{#1}{\xi}}   


Now we will remove the restriction to the function-algebraic case. So we assume that in addition to the function symbols $\{ +, \times,\SAppS \} \cup \{ \partial_i\mid i\in\mathbb{N} \}$, the set of function symbols also contains a finite set $\Omega$ of function symbols with signature~$\mathcal{F}_1\rightarrow\mathcal{F}_1$, which we call the \emph{transcendental function symbols}. Moreover, we assume that each such function symbol $\omega\in\Omega$ represents a function such that for every closed interval with rational endpoints $I\subseteq\RR$, for every natural number $k$, one can compute a polynomial $p$ with rational coefficients that approximates the function $\Struct(\omega)$ up to precision $2^{-k}$ in $I$. That is, we require that for all $x$ in $I$, $|p(x)-\Struct(\omega)(x)|\leq 2^{-k}$, which is one of several equivalent definitions of the notion of computable function in computable analysis~\cite{Pour-El:89,Brattka:21}. \Comment{should we allow multi-dimensional functions?} 
Functions such as the sine function or exponential function are computable in this sense, but also many further functions.

In a similar way to Section~\ref{sec:scal-quant}, we will first provide an algorithm that checks satisfiability of formulas, and will later show that this algorithm is complete under some robustness assumption.

\subsection{Checking Satisfiability}

We start by replacing the transcendental function symbols by fresh function variables.  Denote for any term $t$  by $\VarAbstr{t}$  the result of replacing---from the bottom up---each sub-term of the form $\omega(s)$, where $\omega\in\Omega$, and $s$ an arbitrary term,  by $\SApp{P}{s}$, where---again for each such sub-term---$P$ is a fresh and unique variable of sort $\mathcal{F}_1$. Finally, denote by $\VarAbstr{\phi}$ the result of replacing every term $t$ in $\phi$ that is not a sub-term of another term by  $\VarAbstr{t}$. We will also call the introduced terms of the form $\SApp{P}{t}$ \emph{transabstraction terms}, and the variables $P$ used as their first argument \emph{transabstraction variables}.


\begin{example}
  \label{xpl:var_abstr}
  For the formula $\phi$ being
  \[ \sin(x^2+\cos(y+1))- \SApp{\partial_1 X}{y}^2+10 \leq 0, \]
$\VarAbstr{\phi}$ is
\[ \SApp{P_{\sin}}{x^2+\SApp{P_{\cos}}{y+1}}- \SApp{\partial_1 X}{y}^2+10  \leq 0, \]
where $P_{\sin}$ and $P_{\cos}$ are fresh variables.
\end{example}

We extend variable assignments in such a way that they assign the semantics of the corresponding transcendental function symbols to each transabstraction variable:
\begin{definition}
  For any variable assignment $\alpha$, $\Omega(\alpha)$ is the variable assignment that is equal to $\alpha$ except that for every transabstraction variable $P$ corresponding to a transitive function symbols $\omega\in\Omega$, $\Omega(\alpha)(P):=\Struct(\omega)$.
\end{definition}

Based on this, our way of abstracting transcendental function symbols preserves the semantics of formulas:
\begin{property}
  \label{prop:var_abstr_pres}
  Let $\phi$ be a formula, and $\alpha$ a variable assignment. Then
  \[ \alpha\models \phi \text{ iff } \Omega(\alpha)\models \VarAbstr{\phi}. \]
\end{property}

We will now approximate the transcendental function symbols by polynomials.
 To make the approximation conservative, we bound the approximation error while also ensuring application of the approximating polynomials only within the range where we can ensure that the approximation error stays within this bound. To mimic this approximation error, we inject variables into terms. For this, we denote, for every term $t$ by $\VarAbstrAppr{t}{\xi}$ the result of replacing---again from the bottom up---each sub-term of the form $\omega(s)$ in $t$ by $app(P,s)+\xi$, where $P$ and $s$ are as in the case of $\VarAbstr{t}$, and $\xi$ is another fresh variable, again different for each such term. We call these fresh variables \emph{error variables}. Again, we extend this from terms to formulas in the obvious way.


Based on this, denote
for $\varepsilon\in\mathbb{Q}_{>0}$ by $\VarAbstrAppr{\phi}{\varepsilon}$ the result of 
rewriting every atomic formula $\psi$ in $\phi$ to
\[\forall \xi_{1} \in [-\varepsilon,\varepsilon]\dots \forall \xi_{r} \in [-\varepsilon,\varepsilon]\;.\!\bigwedge_{i\in \{1,\dots, r\}} -1/\varepsilon\leq \VarAbstrAppr{s_i}{\xi}\leq 1/\varepsilon\wedge \VarAbstrAppr{\psi}{\xi},\]
where
\begin{itemize}
\item $s_1,\dots,s_r$ are the terms forming the arguments of the transcendental function symbols in $\psi$, and hence $\VarAbstrAppr{s_i}{\xi}$, $i\in\{1,\dots, r\}$, form the second arguments of the transabstraction terms in $\VarAbstrAppr{\psi}{\xi}$, and
\item $\xi_1,\dots, \xi_r$ are the corresponding error variables.
\end{itemize}

\begin{example}
  Continuing with Example~\ref{xpl:var_abstr}, $\VarAbstrAppr{\phi}{1/10}$ is
  \begin{multline*}
    \forall \xi_{\sin}\in[-1/10,1/10] \;\forall \xi_{\cos}\in[-1/10,1/10]\;.\;\\
    -10\leq x^2+\SApp{P_{\cos}}{y+1}+\xi_{\cos}\leq 10\wedge -10\leq y+1\leq 10\;\wedge\\ \SApp{P_{\sin}}{x^2+\SApp{P_{\cos}}{y+1}+\xi_{\cos}}+\xi_{\sin}  - \SApp{\partial_1 X}{y}^2+10\leq 0.
  \end{multline*}

\end{example}


Any assignment satisfying $\VarAbstrAppr{\phi}{\varepsilon}$ is robust enough to certify satisfaction of $\VarAbstr{\phi}$ when approximating the functions assigned to $P_1,\dots,P_r$ closely enough. This will then allow us to use polynomials as such approximations. For this we will first introduce a tool for measuring the precision of approximating assignments:
\begin{definition}
  For two assignments $\alpha$ and $\alpha'$, set of function variables $\Pi$, and compact set $B\subseteq\mathbb{R}$,
  \[\ d_B^{\Pi}(\alpha,\alpha'):=max_{P\in\Pi} d^0_B(\alpha(P),\alpha'(P)),\] if $\alpha$ and $\alpha'$ coincide on all variables not in $\Pi$, and  $d_B^{\Omega}(\alpha,\alpha')=\infty$, otherwise.
\end{definition}

The following theorem will allow us to infer satisfaction with transcendental function symbols from satisfaction with polynomial functions that approximate the transcendendal function symbols.

\begin{theorem}
\label{thm:conservative}
For every negation-free formula $\phi$, set $\Pi$ of transabstraction variables in $\VarAbstr{\phi}$, for every $\varepsilon>0$, for all assignments $\alpha^P$\Comment{The superscript $P$ may be misleading, since the theorem does not depend on this assignment to use polynomial functions} and $\alpha^\omega$ such that
  \begin{itemize}
  \item  $\alpha^P\models\VarAbstrAppr{\phi}{\varepsilon}$, and
  \item $d_{[-1/\varepsilon,1/\varepsilon]}^{\Pi}(\alpha^P, \alpha^\omega)<\varepsilon$,

  \end{itemize}
also $\alpha^\omega\models\VarAbstr{\phi}$.
\end{theorem}


\begin{proof}
 Assume an assignment $\alpha^P$ satisfying the preconditions of the theorem.
 We start by considering the case where $\phi$ is an atomic formula. Then
 $\VarAbstrAppr{\phi}{\varepsilon}$ is of the form
 \[\forall \xi_{1} \in [-\varepsilon,\varepsilon]\dots \forall \xi_{r} \in [-\varepsilon,\varepsilon]\;.\; \bigwedge_{i\in \{1,\dots, r\}} -1/\varepsilon\leq \VarAbstrAppr{s_i}{\xi}\leq 1/\varepsilon\wedge \VarAbstrAppr{\phi}{\xi},\]
 with $s_1,\dots,s_r,\xi_1,\dots, \xi_r$ as above.
  We prove that $\alpha^\omega\models \VarAbstr{\phi}$.  For this we construct an extension $\alpha^P_{\xi}$ of $\alpha^P$ s.t.
  \begin{enumerate}
  \item   for all $i\in \{ 1,\dots, r\}$, $\alpha^P_{\xi}(\xi_{i})\in[-\varepsilon,\varepsilon]$, and
  \item $\alpha^P_{\xi}\models \VarAbstrAppr{\phi}{\xi}$ iff $\alpha^\omega\models \VarAbstr{\phi}$.
  \end{enumerate}
  Then, since   $\alpha^P\models\VarAbstrAppr{\phi}{\varepsilon}$, due to the first property,  $\alpha^P_{\xi}\models\VarAbstrAppr{\phi}{\xi}$. From this, the second property ensures $\alpha^\omega\models\VarAbstr{\phi}$, which is what we wanted to prove.
  
  To construct $\alpha_{\xi}^P$, we recursively follow the tree structure of the atomic formula $\phi$ from the leaves to the root, constructing $\alpha_{\xi}^P$ in such a way that for all $i\in\{ 1,\dots, r\}$, 
  \begin{itemize}
  \item $\alpha^P_{\xi}(\xi_{i})\in[-\varepsilon,\varepsilon]$, and
  \item $\alpha^P_{\xi}(\VarAbstrAppr{s_i}{\xi})=\alpha^{\omega}(\VarAbstr{s_i})$.
  \end{itemize}

W.l.o.g. assume that $s_1,\dots,s_{r}$ are ordered in such a way that for every $i,j\in \{1,\dots, {r}\}$ with $i<j$, $s_i$ does not occur below $s_i$ in the syntax tree of the atomic formula~$\phi$.
  
The only difference between  $\VarAbstrAppr{\phi}{\xi}$ and $\VarAbstr{\phi}$ is in the variables~$\xi_i$ added to app-terms. We choose $\alpha^P_{\xi}$ as an extension of $\alpha^P$ such that for all $i\in \{ 1,\dots, {r}\}$, and corresponding transabstraction variables~$\{ P_1,\dots,P_r\}$,
\[\alpha^P_{\xi}(\xi_{i})= \alpha^\omega(\SApp{P_{i}}{\VarAbstr{s_i}})-\alpha^P_{\xi}(\SApp{P_{i}}{\VarAbstrAppr{s_i}{\xi}}),\]
which is well defined, since $\VarAbstrAppr{s_i}{\xi}$ only contains $\xi_{j}$ with $j>i$, and hence 
the value of $\alpha^P_{\xi}(\xi_{_i})$ only depends on values of $\alpha^P_{\xi}(\xi_{j})$ with $j>i$, and especially $\alpha^P_{\xi}(\xi_{_r})$ does not depend on the value of any $\alpha^P_{\xi}(\xi_{i})$, $i\in\{1,\dots,r\}$.

Now we prove that for every $i\in \{1,\dots,r\}$,  $\alpha^P_{\xi}(\VarAbstrAppr{s_i}{\xi})=\alpha^{\omega}(\VarAbstr{s_i})$. First observe that
$\alpha^\omega(\VarAbstr{s_r})=\alpha^P_{\xi}(\VarAbstrAppr{s_r}{\xi})=\alpha^P(\VarAbstrAppr{s_r}{\xi})$, since $\VarAbstr{s_r}=\VarAbstrAppr{s_r}{\xi}$ and $s_r$ neither contains any $P_i$ nor $\xi_i$, $i\in \{1,\dots, {r}\}$.  Now, we prove using induction from ${r}$ down to $1$ that for all $i\in\{1,\dots, {r}\}$, and differing corresponding respective sub-terms $\SApp{P_i}{\VarAbstr{s_i}}$ of $\VarAbstr{\phi}$ and $\SApp{P_i}{\VarAbstrAppr{s_i}{\xi}}+\xi_i$ of $\VarAbstrAppr{\phi}{\xi}$, \[\alpha^P_{\xi}(\SApp{P_i}{\VarAbstrAppr{s_i}{\xi}}+\xi_i)=
\alpha^\omega(\SApp{P_i}{\VarAbstr{s_i}}).\]
This holds due to the following observations:
  \begin{itemize}
  \item $\alpha^P_{\xi}(\VarAbstrAppr{s_i}{\xi})=\alpha^{\omega}(\VarAbstr{s_i})$,
  \item
    $\begin{aligned}[t]
\alpha^P_{\xi}(\xi_i)&=\alpha^\omega(\SApp{P_{i}}{\VarAbstr{s_i}})-\alpha^P_{\xi}(\SApp{P_{i}}{\VarAbstrAppr{s_i}{\xi}})\\
&=\alpha^{\omega}(P_{i})(\alpha^{\omega}(\VarAbstr{s_i}))-\alpha^P(P_{i})(\alpha^P_{\xi}(\VarAbstrAppr{s_i}{\xi})),      
    \end{aligned}$\\
     and since  \[-1/\varepsilon\leq\alpha^P_{\xi}(\VarAbstrAppr{s_i}{\xi})=\alpha^{\omega}(\VarAbstr{s_{i}})\leq 1/\varepsilon,\] also \[|\alpha^{\omega}(P_{i})(\alpha^{\omega}(s_i))-\alpha^P(P_{i})(\alpha^P_{\xi}(\VarAbstrAppr{s_i}{\xi}))|\leq ||\alpha^P(P_\omega)-\alpha^\omega(P_\omega)||^{0}_{[-1/\varepsilon,1/\varepsilon]}\leq \varepsilon,\] and hence $\alpha^P_{\xi}(\xi_{i})\in[-\varepsilon,\varepsilon]$, and finally
   \item $\alpha^P_{\xi}(\SApp{P_{i}}{\VarAbstrAppr{s_i}{\xi}}+\xi_i)
            =\alpha^P_{\xi}(\SApp{P_{i}}{\VarAbstrAppr{s_i}{\xi}})+\alpha^P_{\xi}(\xi_i)=
\alpha^P_{\xi}(\SApp{P_{i}}{\VarAbstrAppr{s_i}{\xi}})+\alpha^\omega(\SApp{P_{i}}{\VarAbstr{s_i}})-\alpha^P_{\xi}(\SApp{P_{i}}{\VarAbstrAppr{s_i}{\xi}})=
           \alpha^\omega(\SApp{P_{i}}{\VarAbstr{s_i}})$.
  \end{itemize}
  
Then $\alpha^P_{\xi}\models\VarAbstrAppr{\phi}{\xi}$, and $\alpha^{\omega}\models\VarAbstr{\phi}$ iff $\alpha^P_{\xi}\models\VarAbstrAppr{\phi}{\xi}$. Hence $\alpha^{\omega}\models\VarAbstr{\phi}$, which is what we wanted to prove.

In the case where $\phi$ is not atomic, the syntax trees of $\VarAbstr{\phi}$ and  $\VarAbstrAppr{\phi}{\varepsilon}$ are identical up to the parts already handled by the atomic case. Then the theorem holds since $\phi$ is negation-free, and the satisfaction relation of all logical symbols except for negation is monotonic wrt. satisfaction of sub-formulas. \Qed
\end{proof}

\begin{example}
  Let $\phi$ be
  \[ \exists x\forall y\in [-1/2,1/2] \;.\; -\sin(y) +\SApp{X}{y}^2-x \leq 0. \]
Then $\VarAbstrAppr{\phi}{1}$ is
\ifacm  
\[    \exists x\forall y\in [-1/2, 1/2]\:\forall \xi \in [-1,1]\;.\;
  -1\leq y\leq 1\wedge-\SApp{P_1}{y}+\SApp{X}{y}^2-x +\xi \leq 0.\]
\else
  \begin{multline*}
    \exists x\forall y\in [-1/2, 1/2]\:\forall \xi \in [-1,1]\;.\;\\
  -1\leq y\leq 1\wedge-\SApp{P_1}{y}+\SApp{X}{y}^2-x +\xi \leq 0.    
  \end{multline*}
\fi  
  Let $\alpha(P_1)=\lambda x \:.\: 0$, $\alpha(X)=\lambda x\:.\: 1$.  Then $\alpha(-\SApp{P_1}{y}+\SApp{X}{y}^2-x +\xi)= 1-\alpha(x)+\alpha(\xi)$. Hence $\alpha\models\VarAbstrAppr{\phi}{1}$, which can be seen by choosing the value for the existentially quantified variable~$x$ large enough. Moreover, $||\alpha(P_1)-\sin||^{0}_{[-1,1]}\leq 1$ and the theorem, together with Property~\ref{prop:var_abstr_pres}, ensures that $\{ X\mapsto \lambda x\:.\:1\}\models\phi$.
\end{example}

From Property~\ref{prop:var_abstr_pres} and Theorem~\ref{thm:conservative} we get:
\begin{corollary}
  \label{cor:converse}
  For every negation-free formula $\phi$, set $\Pi$ of transabstraction variables in $\VarAbstr{\phi}$, for every $\varepsilon>0$, for all assignments $\alpha^P$ and $\alpha$ such that
      \begin{itemize}
      \item  $\alpha^P\models\VarAbstrAppr{\phi}{\varepsilon}$, and

      \item $d_{[-1/\varepsilon,1/\varepsilon]}^{\Pi}(\alpha^P, \Omega(\alpha))<\varepsilon$,
 
  \end{itemize}
 also $\alpha\models \phi$.
\end{corollary}

Note that Corollary~\ref{cor:converse} only bounds the absolute value of the distance between the transcendental function symbols and their polynomial approximation, but not the distance between any derivative. This is enough, since the transabstraction variables only occur as arguments of transabstraction terms. Hence it is not possible to form expressions where the transcendental function symbols appear as as an argument of function type. Especially one cannot apply differentiation operators to transcendental function symbols.

Due to Corollary~\ref{cor:converse}, we can check  $\alpha\models \phi$ by checking the preconditions of this corollary for some $\varepsilon$. To check satisfiability of a formula  $\phi$ containing transcendental function symbols, we can approximate all transcendental function symbols occurring in $\phi$ by polynomials, form a corresponding polynomial assignment $\pi_\Omega$ on the set of transabstraction variables, and check whether  $\Pi_{\pi_{\Omega}}(\VarAbstrAppr{\phi}{\varepsilon})$ is satisfiable using Algorithm~\ref{alg:check}. This guarantees the existence of an assignment satisfying the preconditions of Corollary~\ref{cor:converse}, and hence satisfiability of $\phi$. However, this needs a certain choice for $\varepsilon$, and it is not clear which choice is small enough. Therefore, we introduce another level of dove-tailing, using a sequence of choices that goes to zero in the limit. The result is Algorithm~\ref{alg:check_trans}.
\begin{algorithm}[htb]
  \caption{Search for a Satisfying Polynomial Assignment}
  \label{alg:check_trans}
  \KwIn{formula $\phi$ with function variables from a finite set $V$}
  \DontPrintSemicolon
  \For{$k\leftarrow 0,\dots$}{
    \For{$i\leftarrow 0,\dots, k$}{
      \If{$\Pi_{\pi^V_i}(\Pi_{\pi_{\Omega}}(\VarAbstrAppr{\phi}{2^{-k}}))$ is satisfiable}{
          \Return{``satisfiable''}
  }}}
\end{algorithm}

Again, this algorithm is correct, but may fail to terminate. In the next section, we will study when such an approach is guaranteed to terminate.

\subsection{Robust Completeness}
We will now show how robustness can be used also here to ensure termination of Algorithm~\ref{alg:check_trans}.
However, we now need a stronger requirement than the one provided by Definition~\ref{def:robsat}.
As an example, consider the formula $\sin(0)\leq 0$, which does not restrict function variables, at all, and hence is semantically robustly satisfiable, but for which Algorithm~\ref{alg:check_trans} will fail, since it takes into account approximations of the sine function that are slightly bigger than zero at the origin.

Throughout this section we will assume that all atomic formulas are either of the form $t\leq 0$ or of the form $t<0$, and that formulas do not contain negations. Arbitrary formulas can easily be brought into this form by simple equivalence transformations.

The literature already provides several possibilities for defining the needed notion of robustness~\cite{Ratschan:02b,Ratschan:02f,Gao:12}. From several equivalent definitions we use  the simplest one, here.
\begin{definition}
  \label{def:robapp}
  An assignment $\alpha$ \emph{satifies} a formula $\phi$ \emph{$\varepsilon$-robustly}  iff
$\alpha\models \phi_{-\varepsilon}$ where $\phi_{-\varepsilon}$ is the result of replacing the zeros on the right-hand sides of formulas by $-\varepsilon$. In this case we also write $\alpha\models^\varepsilon\phi$. A formula $\phi$ is \emph{robustly satisfiable} iff there is an assignment $\alpha$ and $\varepsilon>0$ such that $\alpha\models^\varepsilon\phi$.
  \end{definition}

\begin{example}
  The formula \[\sin(x)-10 \leq 0,\] is robustly satisfiable, since $\{ x\mapsto 0 \}\models^1\phi$, which is $\{ x\mapsto 0 \}\models\sin x-10 \leq -1$. The formula $\sin(0)\leq 0$ from above, while being satisfiable, is not robustly satisfiable.
\end{example}

However, equalities form a special challenge. 
\begin{example}
  Consider the formula $\sin x=0$. To bring it into the form required here, we rewrite it to the equivalent formula $\sin x\leq 0\wedge -\sin x\leq 0$ which is satisfiable, but not robustly so.
\end{example}

This is a restriction that is common to such approaches~\cite{Ratschan:02f,Gao:12}. Insisting on the ability to prove equalities requires advanced techniques from the mathematical field of topology~\cite{Franek:12}, which is beyond the scope of this paper. Still, it is possible to 
rewrite the formula $\sin x=0$ to $\sin x-\varepsilon\leq0\wedge -\sin x-\varepsilon\leq 0$, for some small positive rational constant $\varepsilon$. While satisfiability of this formula does not imply satisfiability of the original formula, it proves the existence of a small solution, which is enough in many applications.

Note that a bounded formula that has an assignment that satisfies it $\varepsilon$-robustly, is semantical robust wrt. the metrics we use in this paper if the domain of interest is large enough.\Comment{Provide a more detailed argument?} This is an easy consequence of the fact that terms in  formulas denote functions that are continuous wrt. these metrics.\Comment{more detailed explanation?}

To prove a converse of Theorem~\ref{thm:conservative}, we define a function that provides a bound on the allowed approximation error for a given set of function variables $\Pi$ that ensures that the value of a given term $t$ and the values of all function applications to elements of $\Pi$ in $t$ do not differ by more than a given value $\delta$. The intuition is  to allow for a half of the error due to propagation from the argument of every function application, and half of the error due to approximation of the function itself.

\begin{definition}
  \label{def:apprprec}
  For every term $t$, set of variables $\Pi$ of sort $\mathcal{F}_1$, variable assignment $\alpha$, and $\delta\in\mathbb{R}_{>0}$, we recursively define $m^{\Pi}_{\alpha}(t, \delta)$ such that 
  \begin{itemize}
  \item for any variable or constant $v$ in $t$, $m^{\Pi}_{\alpha}(v,\delta)=\infty$,
  \item for any app-term $\SApp{P}{s}$ in $t$ with $P\in\Pi$, \[m^{\Pi}_{\alpha}(\SApp{P}{s},\delta)= \min \{ \delta/2, m^{\Pi}_{\alpha}(s,\min \{\delta/2,\ModCont{\alpha(P)}{\alpha(s)}(\delta/2)\})\},\] and
  \item for $t$ being a term either of the form $f(s_1,\dots,s_k)$, or of the form $\SApp{T}{s_1,\dots,s_k}$, $T\not\in\Pi$ (the latter includes terms with differentiation operators),
    \[m^{\Pi}_{\alpha}(t,\delta)= \min_{i\in \{1,\dots,k \}}m^{\Pi}_{\alpha}(s_i,\ModCont{F}{\alpha(s_1),\dots, \alpha(s_k)}(\delta))\] with $F=
    \left\{\begin{array}{l}
      \Struct(f) \text{ in the former case, and} \\
      \alpha(T), \text{ in the latter.} 
    \end{array}\right.$, 
  \end{itemize}
  where
  for a $k$-ary function $F$ and $x\in\mathbb{R}^k$, $\ModCont{F}{x}$ is a modulus of continuity of $F$ at $x$, that is, for all $x'\in\mathbb{R}^k$, $|F(x)-F(x')|\leq \ModCont{F}{x}(|x-x'|)$.  \end{definition}

Note that $m^{\Pi}_{\alpha}(t, \delta)\in \mathbb{R}^{>0}\cup\{\infty\}$, and $m^{\Pi}_{\alpha}(t, \delta)$ is continuous in the values of the scalar variables in the variable assignment~$\alpha$.
 
The bound indeed ensures convergence of approximation:
\begin{lemma}
  \label{lem:bound_func}
  For every term $t$, set of variables $\Pi$ of sort $\mathcal{F}_1$, positive real numbers $\delta$ and $\varepsilon$, and variable assignments $\alpha$ and $\alpha'$ such that
  \begin{itemize}
  \item $0<\delta\leq 1/\varepsilon$,
  \item $d_{[-1/\varepsilon,1/\varepsilon]}^{\Pi}(\alpha, \alpha')\leq m^{\Pi}_{\alpha}(t,\delta)$,
  \item for all app-terms $\SApp{P}{s}$ in $t$ with $P\in\Pi$, $\alpha(s)\in [-\frac{1}{2\varepsilon},\frac{1}{2\varepsilon}]$,
    \end{itemize}
we have that
  \[|\alpha(t)-\alpha'(t)|\leq\delta.\]
\end{lemma}

\begin{proof}
 We use induction on the structure of terms to prove that the lemma is true for every term. So we prove that the lemma holds for a term $t$ under the assumption that it holds for all of its sub-terms. For doing so by distinguish the three cases of Definition~\ref{def:apprprec}:

 If $t$ is a variable or constant, then 
 $|\alpha(t)-\alpha'(t)|= 0\leq\delta$.  Otherwise, we assume that the lemma holds for all strict sub-terms of $t$ and prove that it holds for $t$, itself. To prove that the lemma holds for $t$, we assume arbitrary, but fixed, $\delta, \varepsilon,\alpha,\alpha'$ such that the pre-conditions of the lemma hold  and prove  that  
 $|\alpha(t)-\alpha'(t)|\leq\delta$.
  
 if $t$ is an app-term $\SApp{p}{s}$ with $p\in\pi$, then
  \begin{align*}
    |\alpha(t)-\alpha'(t)|&=\\
    |\alpha(\SApp{p}{s})-\alpha'(\SApp{p}{s})|&=\\
    |\alpha(p)(\alpha(s))-\alpha'(p)(\alpha'(s))|&
  \end{align*}
To prove that the latter is less or equal $\delta$, we use the pre-condition
\[d_{[-1/\varepsilon,1/\varepsilon]}^{\Pi}(\alpha, \alpha')\leq m^{\Pi}_{\alpha}(\SApp{P}{s},\delta).\]
Due to the definition of $m$, this is
\[d_{[-1/\varepsilon,1/\varepsilon]}^{\Pi}(\alpha, \alpha')\leq \min\{ \delta/2,m^{\Pi}_{\alpha}(s,\min\{\delta/2,\ModCont{\alpha(P)}{\alpha(s)}(\delta/2)\})\},\]
which implies
\[d_{[-1/\varepsilon,1/\varepsilon]}^{\Pi}(\alpha, \alpha')\leq\delta/2,\]
and
\[d_{[-1/\varepsilon,1/\varepsilon]}^{\Pi}(\alpha, \alpha')\leq m^{\Pi}_{\alpha}(s,\min\{\delta/2,\ModCont{\alpha{P}}{\alpha(s)}(\delta/2)\}).\]
The choices $s$ for the universally quantified $t$, and $\min\{\delta/2,\ModCont{\alpha(P)}{\alpha(s)}(\delta/2)\}$ for $\delta$ satisfy the pre-conditions of the induction hypothesis, and we get the conclusion
\[|\alpha(s)-\alpha'(s)|\leq \min\{\delta/2,\ModCont{\alpha(P)}{\alpha(s)}(\delta/2)\}.\]
Since $\ModCont{\alpha(P)}{\alpha(s)}$ is a modulus of continuity of $\alpha(P)$ at $\alpha(s)$, this implies
\[|\alpha(P)(\alpha(s))-\alpha(P)(\alpha'(s))|\leq\delta/2.\]
This accounts for the difference between $\alpha(s)$ and $\alpha'(s)$ in     $|\alpha(P)(\alpha(s))-\alpha'(P)(\alpha'(s))|$. It remains to bound the error due to the difference between $\alpha(P)$ and $\alpha'(P)$.
The conclusion $|\alpha(s)-\alpha'(s)|\leq \min\{\delta/2,\ModCont{\alpha(P)}{\alpha(s)}(\delta/2)\}$ also implies
\[|\alpha(s)-\alpha'(s)|\leq \delta/2\leq \frac{1}{2\varepsilon}\]
and since $\alpha(s)\in [-\frac{1}{2\varepsilon},\frac{1}{2\varepsilon}]$,
\[\alpha'(s)\in[-1/\varepsilon,1/\varepsilon].\]
From this, and $d_{[-1/\varepsilon,1/\varepsilon]}^{\Pi}(\alpha, \alpha')\leq \delta/2$, we get   \[|\alpha(P)(\alpha'(s))-\alpha'(P)(\alpha'(s))|\leq\delta/2.\]
Now we can combine both errors, and get
\begin{align*}
  |\alpha(P)(\alpha(s))-\alpha'(P)(\alpha'(s))|&=\\
|\alpha(P)(\alpha(s))-\alpha(P)(\alpha'(s))+\alpha(P)(\alpha'(s))-\alpha'(P)(\alpha'(s))|&\leq\\ 
|\alpha(P)(\alpha(s))-\alpha(P)(\alpha'(s))|+|\alpha(P)(\alpha'(s))-\alpha'(P)(\alpha'(s))|&=\\
      \delta/2+\delta/2&=\delta.  
\end{align*}
  If $t$ is of the form $f(s_1,\dots,s_k)$ then we have to prove that under the pre-conditions of the lemma
\[|\alpha(f(s_1,\dots,s_k))-\alpha'(f(s_1,\dots,s_k))|\leq\delta.\]
Due to the definition of $m$, we know that
\[d_{[-1/\varepsilon,1/\varepsilon]}^{\Pi}(\alpha, \alpha')\leq m^{\Pi}_{\alpha}(f(s_1,\dots,s_k),\delta)=
  \min_{i\in \{1,\dots,k \}}m^{\Pi}_{\alpha}(s_i,\ModCont{\Struct(f)}{\alpha(s_1),\dots, \\\alpha(s_k)}(\delta))\]
        and hence, due to the induction hypothesis, for every $i\in\{ 1,\dots, k \}$, \[|\alpha(s_i)-\alpha'(s_i)|\leq\ModCont{\Struct(f)}{\alpha(s_1),\dots, \alpha(s_k)}(\delta).\]
Now we have 
        \begin{align*}
|\alpha(f(s_1,\dots,s_k))-\alpha'(f(s_1,\dots,s_k))|=\\
          |\Struct(f)(\alpha(s_1),\dots,\alpha(s_k)))-\Struct(f)(\alpha'(s_1),\dots,\alpha'(s_k)))|
        \end{align*}
        and since $\ModCont{\Struct(f)}{\alpha(s_1),\dots, \alpha(s_k)}(\delta)$ is a modulus of continuity of $\Struct(f)$ at $\alpha(s_1),\dots, \alpha(s_k)$,
        \[|\alpha(f(s_1,\dots,s_k))-\alpha'(f(s_1,\dots,s_k))|\leq\delta,\] which is what we wanted to prove.
The case where $t$ is an app-term whose first argument is a term not equal to a variable from $\Pi$ is analogous to the previous case. \Qed
\end{proof}

We extend the previous lemma to the case, where the term is, in addition, perturbed by the values of auxiliary variables. 
\begin{lemma}
  \label{lem:bound}
  For every term $t$ with $r$ occurrences of transcendental function symbols from $\Omega$, and corresponding terms $\SApp{P_i}{s_i}+\xi_i$, $i\in \{1,\dots,r\}$ in $\ErrorVar{t}$, 
 for every real number $\delta>0$ and $\varepsilon>0$, variable assignment $\alpha$, $\alpha'$ such that
  \begin{itemize}
  \item $d_{[-1/\varepsilon,1/\varepsilon]}^{\Pi}(\alpha, \alpha')\leq m^{\Pi}_{\alpha}(t,\delta)/2$, and
  \item for all $i\in\{1,\dots,r\}$,
    $\alpha(s_i)\in [-\frac{1}{2\varepsilon},\frac{1}{2\varepsilon}]$, and
    $|\alpha'(\xi_i)|\leq m^{\Pi}_{\alpha}(s_i,\delta)/2$,
        \end{itemize}
we have that
  \[|\alpha(\VarAbstr{t})-\alpha'(\ErrorVar{t})|\leq \delta.\]

\end{lemma}

\begin{proof}
  Let $\alpha''$ be such that
  \begin{itemize}
  \item for all $i\in\{1,\dots,r\}$, $\alpha''(\xi_i)=0$,
  \item for all $i\in\{1,\dots,r\}$, for all $x\in\mathbb{R}$, $\alpha''(P_{i})(x)=\alpha'(P_{i})(x)+\alpha'(\xi_i)$, and
  \item for all other elements, $\alpha'$ and $\alpha''$ are identical.
  \end{itemize}
  Then \[\alpha'(\ErrorVar{t})=\alpha''(\VarAbstr{t}),\] \[d_{[-1/\varepsilon,1/\varepsilon]}^{\Pi}(\alpha, \alpha'')\leq d_{[-1/\varepsilon,1/\varepsilon]}^{\Pi}(\alpha,\alpha')+d_{[-1/\varepsilon,1/\varepsilon]}^{\Pi}(\alpha', \alpha'')=m^{\Pi}_{\alpha}(t,\delta)/2+ m^{\Pi}_{\alpha}(t,\delta)/2=m^{\Pi}_{\alpha}(t,\delta),\] and Lemma~\ref{lem:bound_func} implies
\[|\alpha(\VarAbstr{t})-\alpha'(\ErrorVar{t})|=|\alpha(\VarAbstr{t})-\alpha''(\VarAbstr{t})|\leq \delta.\]\Qed
\end{proof}

Based on this, we can prove the following converse of Theorem~\ref{thm:conservative}:
\begin{theorem}
  \label{thm:conv}
  Let $\phi$ be a bounded negation-free formula\Comment{generalize to the case where only universally quantified variables are bounded?}. Let $\Pi$ be the set of transabstraction variables in $\VarAbstr{\phi}$.
  Let $\alpha^{\omega}$ be an assignment and $\delta>0$ s.t. $\alpha^{\omega}\models^{\delta}\VarAbstr{\phi}$. Then there is an $\varepsilon>0$ such that for every $\varepsilon'\leq \varepsilon$, for every $\alpha^P$ with   $d_{[-1/\varepsilon,1/\varepsilon]}^{\Pi}(\alpha^\omega,\alpha^P)<\varepsilon'$,
 $\alpha^P\models\VarAbstrAppr{\phi}{\varepsilon'}$.
\end{theorem}

\begin{proof}
  First will first discuss the case of a single atomic formula $t\leq 0$, with the terms $s_1,\dots, s_r$ and error variables $\xi_1,\dots, \xi_r$ being as above.
  Choose $\varepsilon$ as \[\min\{ m^{\Pi}_{\alpha^{\omega}}(t,\delta)/2, \min_{i\in\{1,\dots, r\}} \frac{1}{2|\alpha^{\omega}(s_i)|}\},\]
which is greater than zero, since $\delta$ is. Let $\varepsilon'\leq\varepsilon$ and $\alpha^P$ with  $d_{[-1/\varepsilon,1/\varepsilon]}^{\Pi}(\alpha^\omega,\alpha^P)<\varepsilon'$,
 be arbitrary, but fixed. Let $\alpha^P_\xi$ extend $\alpha^P$ such that $\alpha^P_\xi(\xi_{1}) \in [-\varepsilon',\varepsilon']\dots \alpha^P_\xi(\xi_{r}) \in [-\varepsilon',\varepsilon']$.
To prove $\alpha^P\models\VarAbstrAppr{\phi}{\varepsilon'}$, it suffices to prove that
  \begin{itemize}
  \item $\alpha^P_{\xi}(\ErrorVar{t})\leq 0$, and
  \item for all $i\in \{ 1,\dots,r\}$, $|\alpha^P_{\xi}(\ErrorVar{s_i})|<1/{\varepsilon'}$.
  \end{itemize}
  For proving the first item, we make the following observations:  
    \begin{itemize}
    \item $d_{[-1/\varepsilon,1/\varepsilon]}^{\Pi}( \alpha^\omega,\alpha_\xi^P)<\varepsilon'\leq\varepsilon\leq  m^{\Pi}_{\alpha^\omega}(t,\delta)/2$, 
    \item for all $i\in\{ 1,\dots, r\}$, $\varepsilon\leq \frac{1}{2|\alpha^{\omega}(s_i)|}$, and hence $|\alpha^{\omega}(s_i)|\leq \frac{1}{2\varepsilon}$
    \item 
    for all $i\in\{ 1,\dots, r\}$ , $|\alpha_{\xi}^P(\xi_i)|\leq\varepsilon'\leq\varepsilon\leq  m^{\Pi}_{\alpha^\omega}(t,\delta)/2$
    \end{itemize}
  Hence,  by Lemma~\ref{lem:bound},   
    $|\alpha^{\omega}(t)-\alpha^P_{\xi}(\ErrorVar{t})|\leq\delta$.
    From this,  since $\alpha^{\omega}\models \phi_{-\delta}$, $\alpha^{\omega}(t)\leq -\delta$, and hence $\alpha^P_{\xi}(\ErrorVar{t})\leq 0$.

For proving the second item, let $i\in \{ 1,\dots,r\}$ be arbitrary, but fixed and prove $|\alpha^P_{\xi}(\ErrorVar{s_i})|<1/{\varepsilon'}$. This is a consequence of our observation that  $|\alpha^{\omega}(s_i)|\leq \frac{1}{2\varepsilon}$, and $\varepsilon'<\varepsilon$.



If $\phi$ is a general formula, we choose $\varepsilon$ as the infimum of the choice of $\varepsilon$ for atomic formulas over all quantifier bounds and all atomic formulas. This choice is continuous in the quantified variables, since  $m^{\Pi}_{\alpha^{\omega}}(t,\delta)/2$ is. Due to the extreme values theorem, this implies that also for the choice of $\varepsilon$ for $\phi$, $\varepsilon>0$. Let $\varepsilon'<\varepsilon$, and  $\alpha^P$ with   $d_{[-1/\varepsilon,1/\varepsilon]}^{\Pi}(\alpha^\omega,\alpha^P)<\varepsilon'$, be arbitrary, but fixed. Due to the proof of the atomic case this means that for all atomic formulas $\psi$ in $\phi$, for all choices of quantified variables, if $\alpha^\omega\models\VarAbstr{\psi}$, then $\alpha^P\models\VarAbstrAppr{\psi}{\varepsilon'}$, as well. Since $\phi$ is negation-free, this means that also  $\alpha^P\models\VarAbstrAppr{\phi}{\varepsilon'}$. \Qed


\end{proof}

We get the corresponding analogy of Corollary~\ref{cor:converse}.
\begin{corollary}
  \label{cor:conv}
  Let $\phi$ be a bounded negation-free formula\Comment{remove the requirement for bounds on existentially quantified variables?}.  Let $\Pi$ be the set of transabstraction variables in $\VarAbstr{\phi}$. Let $\alpha^{\omega}$ be an assignment and $\delta>0$ s.t. $\alpha^{\omega}\models^{\delta}\phi$. Then there is an $\varepsilon>0$ such that for every $\varepsilon'\leq \varepsilon$, for every $\alpha^P$ such that 
 $d_{[-1/\varepsilon,1/\varepsilon]}^{\Pi}(\alpha^P, \Omega(\alpha^\omega))<\varepsilon$, 
 $\alpha^P\models\VarAbstrAppr{\phi}{\varepsilon'}$.
\end{corollary}

And this implies termination of Algorithm~\ref{alg:check_trans}:
\begin{corollary}
Algorithm~\ref{alg:check_trans} terminates for every bounded negation-free input formula that is both robustly satisfiable and semantically robustly satisfiable wrt. the metric $d^=\times d_K^k$.
\end{corollary}

This generalizes previous results~\cite{Ratschan:02f,Gao:12} from the case with only scalar variables to the case of formulas that also contain functional variables. Moreover, while our proofs assume the transcendental function symbols to be computable in the sense of computable analysis~\cite{Pour-El:89,Brattka:21}, the proofs themselves only refer to classical computability theory and mathematics, hence being accessible to readers not familiar with computable analysis.

\section{Combination}
\label{sec:combination}

Up to now, we have discussed algorithms for several classes of formulas. In each step, we generalized the class of formulas that can be handled, while at the same time making the requirements for termination stronger. 
\begin{itemize}
\item Section~\ref{sec:quantifier-free-case}: no robustness requirements
\item Section~\ref{sec:scal-quant}: semantically robustly satisfiable (Definition~\ref{def:robsat})
\item Section~\ref{sec:transcendental}: robustly satisfiable (Definition~\ref{def:robapp})
\end{itemize}
We will now combine this to design algorithms that work for formulas that are conjunctions whose individual parts belong to different classes.
For example, consider the formula
\ifacm
  \[ \SApp{X}{0}=1 \wedge \forall t\in [0,1]\;.\; \SApp{X}{t}-0.1\leq \SApp{\dot{X}}{t}\wedge\SApp{\dot{X}}{t}\leq \SApp{X}{t}+0.1, \]
\else
\begin{multline*}
\SApp{X}{0}=1\;\wedge\\ \forall t\in [0,1]\;.\;
  \SApp{X}{t}-0.1\leq \SApp{\dot{X}}{t}\wedge\SApp{\dot{X}}{t}\leq \SApp{X}{t}+0.1,
\end{multline*}
\fi
that is function-algebraic, and whose second part we already identified to be semantically robust wrt. the metric $d^1_{[0,1]}$. However, the formula as a whole is not semantically robust wrt. the metric $d^1_{[0,1]}$ due to the equality $\SApp{X}{0}=1$. Nonetheless, this equality is quantifier-free, and hence one might expect that no robustness requirement should be needed for handling this part. In general, we would like to be able to use our algorithms to prove satisfiability of formulas consisting of a non-robust, but quantifier-free part and a part that is robust, but may contain scalar quantifiers.

We will start with an auxiliary lemma that is a direct consequence of the fact that any element of a semi-algebraic set in $\mathbb{R}^n$ is an element of a cell of a cylindrical algebraic decomposition~\cite{Arnon:86,Jirstrand:95} of $\mathbb{R}^n$.
\begin{lemma}
  \label{lem:close_alg}
  Let $(x_1,\dots,x_n)\in\mathbb{R}^n$ be an element of a semi-algebraic set $S\subseteq\mathbb{R}^n$, and let $\varepsilon>0$.  Then there is $(x_1',\dots,x_n')\in S$ such that for all $i\in \{ 1,\dots, n\}$, $|x_i-x_i'|\leq\varepsilon$ and $x_i'$ is real algebraic.
\end{lemma}

We will also need to concurrently approximate and interpolate smooth functions by polynomials with rational coefficients.
\begin{lemma}
  \label{lem:ip_approx}
  For every smooth function $f:\mathbb{R}^n\rightarrow\mathbb{R}$, and real algebraic $(x_1,y_1),\dots, (x_k,y_k)\in \RR^n\times \RR$ there is a polynomial function~$p:\mathbb{R}^n\rightarrow\mathbb{R}$ with rational coefficients such that $p(x_1)=y_1,\dots,p(x_k)=y_k$ and
  \[||f-p||<2\max_{i\in\{ 1,\dots,k\}} |f(x_i)-y_i|.\]
\end{lemma}

\begin{proof}
  Let $d:=\max_{i\in\{ 1,\dots,k\}} |f(x_i)-y_i|$. Let $B$ be a bump function s.t. $B(0)=1$, $\sup_{x\in\mathbb{R}^n}B(x)=1$, and for all $x\in\RR^n$ with $||x||\geq 1$, $B(x)=0$. Let $e:= \min_{i,j\in \{ 1,\dots, k\}, i\neq j} ||x_i-x_j||$. Let \[f'(x)=f(x)+\sum_{i=1,\dots,k} (y_i-f(x_i)) B(\frac{x-x_i}{e}).\] Then for every $i\in \{1,\dots, k\}$, $f'(x_i)=y_i$, and $||f-f'||<d$. But, in general, $f'$ will not be a polynomial.
  

  Now apply a theorem by Evard and Jafari~\cite[Theorem 6]{Evard:94} to $f'$ to obtain a polynomial~$p$ with
  $||p-f'||<d$ that has identical values to $f'$ at $x_1,\dots, x_k$. Since $||f-f'||<d$ and  $||p-f'||<d$, also $||f-p||<2d$, as required. However, in general, $p$ does not have rational coefficients.

  To ensure rational coefficients, observe that the Evard and Jafari construct the interpolating polynomial from Hermite interpolation, which can be computed by solving a linear system of equations whose coefficients are given by a confluent Vandermonde matrix whose entries are polnomial in the interpolation points. Since the interpolation points are real-algebraic numbers, the resulting Hermite interpolation polynomial has real-algebraic number coefficients, and by transitivity of algebraic dependence there is also an interpolating polynomial with rational coefficients. \Qed
\end{proof}

The problem with the application of Algorithm~\ref{alg:check} to formulas such as the one above lies in the fact that due to the lack of robustness of the first, quantifier-free part we cannot conclude that the formula is satisfiable by an assignment that assigns polynomials with rational coefficients to all functional variables. The following theorem ensures this:

\begin{theorem}
  \label{thm:perturb}
  For every assignment $\alpha$ satisfying a conjunctive formula $\phi$ that is function-algebraic and quantifier-free, for every $\varepsilon>0$ there is a variable assignment $\alpha'$ with $d_K^k(\alpha,\alpha')<\varepsilon$ that also satisfies $\phi$, and that assigns
  polynomials with rational coefficients to all function variables.
\end{theorem}

\begin{proof}
  Let $\alpha$ be an arbitrary, but fixed variable assignment satisfying the formula~$\phi$, and let $\varepsilon>0$.
  Let the formulas $\pi_R$ and $\pi_U$ be the result of applying the variable abstraction phase of the Nelson-Oppen method to $\tau(\phi)$. Hence, $\pi_R$ is a formula in the language of real closed fields, and $\pi_U$ a formula in the language of uninterpreted function symbols.
  Let $I^\tau$ be the $\TheoryRU$-interpretation corresponding to $\alpha$, constructed in  the same way as in the  $\Rightarrow$-part of the proof of Theorem~\ref{thm:equisat}, but also assigning according values to the fresh variables in $\pi_R$ and $\pi_U$ introduced by variables abstraction. Let $T_U$ be the set of sub-terms in $\pi_U$, and let $\sim_\tau\subseteq T_U\times T_U$ be an equivalence relation such that for two terms $s$ and $t$ in $T_U$, $s\sim_\tau t$ iff $I^\tau(s)=I^\tau(t)$. 
Let $E$ be the restriction of $\sim_\tau$ to $vars(\pi_U)$.

Then $I^\tau$ satisfies $\rho(vars(\pi_U), E)$, and hence it also satisfies $\pi_R\wedge \rho(vars(\pi_U), E)$. By Lemma~\ref{lem:close_alg}, there is an interpretation~$I''$ that is identical to $I^\tau$ on all function variables, still satisfies $\pi_R\wedge \rho(vars(\pi_U),E)$, but assigns real-algebraic numbers to all scalar variables such that
 \begin{itemize}
 \item for every scalar variable $v$, $|I^{\tau}(v)-I''(v)|<\varepsilon/2$,
 \item for every term $t\in T_U$,  $|I^{\tau}(t)-I''(t)|<\varepsilon/2$ (this is possible since $I^\tau$ interprets all function symbols by smooth, and hence continuous, functions), and
 \item for every pair of terms $(s,t)\in T_U\times T_U$ with $s\not\sim_\tau t$, $I''(s)\neq I''(t)$.
 \end{itemize}
Due to the last item, for every disequality $s\neq t$ in $\pi_U$, $I''(s)\neq I''(t)$, and hence $I''$ satisfies those disequalities. However, $I''$ assigns values to scalar variables that are different from the values $I^\tau$ assigns, and hence it does not necessarily satisfy the equalities in $\pi_U$.

To rectify this problem, let $r: T_U\rightarrow\mathbb{R}$ such that $r(t):= I''(v)$ iff there is a variable $v$ such that $t\sim_\tau v$ (this value is unique, if there are several such variables), and such that $r(t):=I^\tau(t)$, iff there is no such variable.
Observe that for all $t\in T_U$, $|r(t)-I^\tau(t)|<\varepsilon/2$.

Now use Lemma~\ref{lem:ip_approx} to construct an interpretation $I'$ that assigns the values of $I''$ to all scalar variables, and that assigns  functions to function symbols  such that
\begin{itemize}
\item for every term $t\in T_U$, $I'(t)=r(t)$, and
\item for every function symbol $f$, $||I'(f)-I^\tau(f)||<\varepsilon$.
\end{itemize}
Hence, for each equality $s=t$ in $\pi_U$, $I'(s)=I'(t)$, so $I'$ satisfies these equalities.  Moreover, for every disequality $s\neq t$ in $\pi_U$, $I'(s)\neq I'(t)$. Concluding, $I'$ satisfies $\pi_U$. Now let $\alpha'$ be the variable assignment corresponding to $I'$. We have $d^k_K(\alpha,\alpha')<\varepsilon$, and $\alpha'$ satisfies $\phi$. 
\Qed
\end{proof}

\begin{corollary}
  Let $\phi$ be a quantifier-free conjunctive formula, and let $\psi$ be function-algebraic such that $\phi\wedge\psi$ is satisfiable by an assignment that satisfies the subformula $\psi$ semantically robustly wrt. the metric $d^=\times d_K^k$. Then $\phi\wedge\psi$ is satisfiable by an assignment that assigns polynomials with rational coefficients to all function variables.
\end{corollary}

\begin{proof}
  Let $\alpha$ be the assignment satisfying $\phi\wedge\psi$, and semantically robustly satisfying $\psi$. Let $\varepsilon>0$ be the corresponding robustness margin. Then, by Theorem~\ref{thm:perturb}, there is an assignment $\alpha'$ with $d_K^k(\alpha,\alpha')<\varepsilon$ that still satisfies the subformula $\phi$ but assigns polynomials with rational coefficients to all function variables. Since the subformula~$\psi$ is semantically robustly satisfiable, also $\alpha'\models\psi$, and hence $\alpha'\models\phi\wedge\psi$. \Qed
\end{proof}

This implies termination of Algorithm~\ref{alg:check}:

\begin{corollary}
  Let $\phi$ be a quantifier-free conjunctive formula, and let $\psi$ be function-algebraic such that $\phi\wedge\psi$ is satisfiable by an assignment that satisfies the subformula $\psi$ semantically robustly wrt. the metric $d_K^k$.
  Then  Algorithm~\ref{alg:check} terminates for $\phi\wedge\psi$.
\end{corollary}

And we get an analogous result for Algorithm~\ref{alg:check_trans}:


\begin{corollary}
  Let $\phi$ be a quantifier-free conjunctive formula, and let $\psi$ be bounded and negation-free such that $\phi\wedge\psi$ is satisfiable by an assignment that satisfies the subformula $\psi$ both robustly satisfiable and semantically robustly satisfiable wrt. the metric $d^=\times d_K^k$. Then Algorithm~\ref{alg:check_trans} terminates for $\phi\wedge\psi$.
\end{corollary}

Moreover, since the first part $\phi$ of the formula $\phi\wedge\psi$ in the two corollaries belongs to the decidable class identified in Corollary~\ref{cor:decidable}, Algorithms~\ref{alg:check} and~\ref{alg:check_trans} can be combined with a for check satisfiability of this part, which is able to detect unsatisfiability of $\phi\wedge\psi$ in the case where this is caused by $\phi$ alone being unsatisfiable.

\Comment{extension to disjunctions?}

\section{Discussion}
\label{sec:discussion}


Algorithms that prove satisfiability by enumerating polynomials are, of course, hopelessly inefficient in practice. Still, our approach may provide useful practical insight. In practice, problems of the kind studied here are solved in many, often distant areas~\cite{Reddy:19,Prajna:07,Ratschan:17,Ravanbakhsh:18a}. A common approach is to restrict
the set of potential solutions to a fixed class of functions given by a parameterized expression (sometimes also called a \emph{template}), and then searching for values for the parameters such that the result of instantiating the parameters by those values represents a solution to the problem.

There are two main classes of templates that are often used here. The first class are templates given by complex expressions~\cite{Edwards:24}, often called neural networks. The second class are polynomials whose coefficients are parameters which allows many methods to exploit the fact that polynomials are linear in their coefficients~\cite{Ratschan:17}. If the given template polynomial does not represent a solution, one can increase the degree of the polynomial. The resulting loop amounts to an enumeration of all polynomials with real coefficients. Our approach (1) formally justifies such algorithms showing that such a loop must terminate for all robustly satisfiable inputs, and (2) generalizes such algorithms and their termination for robust inputs to all formulas belonging to the language used in this paper.

We demonstrated in Section~\ref{sec:scal-quant} that this implies that Lyapunov functions of asymptotically stable polynomial ordinary differential equations can be found algorithmically. We expect that the methodology developed in this paper allows for the straightforward proof of algorithmic versions of further converse results~\cite{Liu:21,Han:21,Ratschan:18} for certificates of properties of continuous dynamical systems.


\section{Conclusion}
\label{sec:conclusion}

We have developed a  framework for decision procedures for a predicate logical theory formalizing a notion that is central to mathematics, computer science, and many other scientific fields---real-valued functions. Our long-term vision is to replace the need for research on application-specific automated reasoning techniques for smooth real-valued functions by a common framework that results in tools that can be used out-of-the-box in a similar way as decision procedure for common first-order theories in the frame of SMT solvers~\cite{Barrett:18}.

\ifacm
  \begin{acks}
\else
  \paragraph*{Acknowledgments}
\fi  
  This work was supported by the project 21-09458S of the Czech Science Foundation GA ČR and
  institutional support RVO:67985807.
\ifacm  
\end{acks}
\fi

\bibliography{sratscha}
\bibliographystyle{abbrv}

\end{document}
